\documentclass[sigplan,nonacm]{acmart}
\usepackage{amsmath,amsfonts}
\usepackage{algorithmic}
\usepackage{adjustbox}
\usepackage{graphicx}
\usepackage{textcomp}
\usepackage{xcolor}
\usepackage{physics}
\usepackage{algorithm}
\usepackage{wrapfig}
\usepackage{subcaption}
\usepackage{diagbox}
\usepackage{multirow}
\usepackage{multicol}
\usepackage{makecell}
\usepackage{enumitem}
\usepackage{hyperref}
\usepackage{bm}
\hypersetup{}

\newtheorem{theorem}{Theorem}[section]

\newtheorem{definition}{Definition}[section]
\newtheorem{proposition}{Proposition}[section]
\newtheorem{example}{Example}[section]

\newcommand{\myCompilerName}{TopoLS} 
\newcommand{\myCompilerNameSpace}{TopoLS }
\newcommand{\jz}[1]{\textcolor{purple}{[JZ: #1]}}

\begin{document}
\title{\myCompilerName: Lattice Surgery Compilation via Topological Program Transformations}

\author{Junyu Zhou}
\affiliation{%
  \institution{University of Pennsylvania}
  \city{Philadelphia}
  \country{USA}
}
\email{junyuzh@seas.upenn.edu}

\author{Yuhao Liu}
\affiliation{%
  \institution{University of Pennsylvania}
  \city{Philadelphia}
  \country{USA}
}
\email{liuyuhao@seas.upenn.edu}

\author{Ethan Decker}
\affiliation{%
  \institution{University of Pennsylvania}
  \city{Philadelphia}
  \country{USA}
}
\email{ecd5249@upenn.edu}

\author{Justin Kalloor}
\affiliation{%
  \institution{UC Berkeley}
  \city{Berkeley}
  \country{USA}
}
\email{jkalloor3@berkeley.edu}

\author{Mathias Weiden}
\affiliation{%
  \institution{UC Berkeley}
  \city{Berkeley}
  \country{USA}
}
\email{mtweiden@berkeley.edu}

\author{Kean Chen}
\affiliation{%
  \institution{University of Pennsylvania}
  \city{Philadelphia}
  \country{USA}
}
\email{keanchen@seas.upenn.edu}

\author{Costin Iancu}
\affiliation{%
  \institution{Lawrence Berkeley National Laboratory}
  \city{Berkeley}
  \country{USA}
}
\email{cciancu@lbl.gov}

\author{Gushu Li}
\affiliation{%
  \institution{University of Pennsylvania}
  \city{Philadelphia}
  \country{USA}
}
\email{gushuli@seas.upenn.edu}

\begin{abstract}
Lattice surgery is a leading approach for implementing fault-tolerant logical operations in surface code quantum computing, but compiling efficient lattice surgery layouts remains challenging. Existing compilers are largely circuit-centric and operate directly on gate sequences, limiting their ability to exploit the topological flexibility of merge-split operations and minimize space--time volume. We present \myCompilerName, a topology-centric compiler that uses ZX diagrams as an intermediate representation for lattice surgery compilation. \myCompilerNameSpace combines semantic-preserving ZX-level program transformations, including spider fusion and topology-aware slicing, with a Monte Carlo Tree Search (MCTS)-based synthesis procedure that constructs pipe-diagram embeddings by jointly optimizing placement and routing in 3D space--time. To scale to large circuits, \myCompilerNameSpace further introduces topology-aware partitioning that decomposes the compilation task into bounded subproblems and limits the routing frontier during embedding. Across evaluated benchmarks, \myCompilerNameSpace achieves an average $46\%$ reduction in space--time volume over prior circuit-centric compilers, with improvements ranging from $25\%$ to $90\%$, and exhibits strong empirical scalability on large benchmark families. Compared with SAT-based formulations that become intractable on larger instances, \myCompilerNameSpace offers a practical end-to-end solution for optimized lattice surgery compilation. \myCompilerNameSpace has been integrated into the TQEC ecosystem, enabling downstream circuit-level simulation and resource estimation workflows.

\end{abstract}

\maketitle % should come after the abstract

% add the paper content here
\section{Introduction} \label{Sec:Introduction}

Quantum error correction (QEC) is essential for achieving scalable and reliable quantum computation~\cite{shor1995scheme, steane1996error, chow2014implementing}. 
Quantum hardware is inherently noisy, suffering from decoherence~\cite{schlosshauer2019quantum}, relaxation~\cite{carroll2022dynamics}, and measurement errors~\cite{funcke2022measurement}, which accumulate rapidly as circuits grow in size. 
Without QEC, even the moderately deep computations become unreliable. 
Among the many proposed codes, the surface code~\cite{fowler2012surface, knill1998resilient, dennis2002topological} is widely regarded as a leading candidate due to its high error threshold, locality of interactions, and compatibility with two-dimensional hardware layouts. 
These properties make it particularly attractive for large-scale fault-tolerant quantum computing and for executing practical quantum algorithms~\cite{gidney2021factor}. 
Recent experimental demonstrations of surface code primitives further support its feasibility~\cite{google2024quantum, google2023suppressing, Krinner_2022}.

\begin{figure*}[t]
    \centering
    \includegraphics[width=0.8\linewidth]{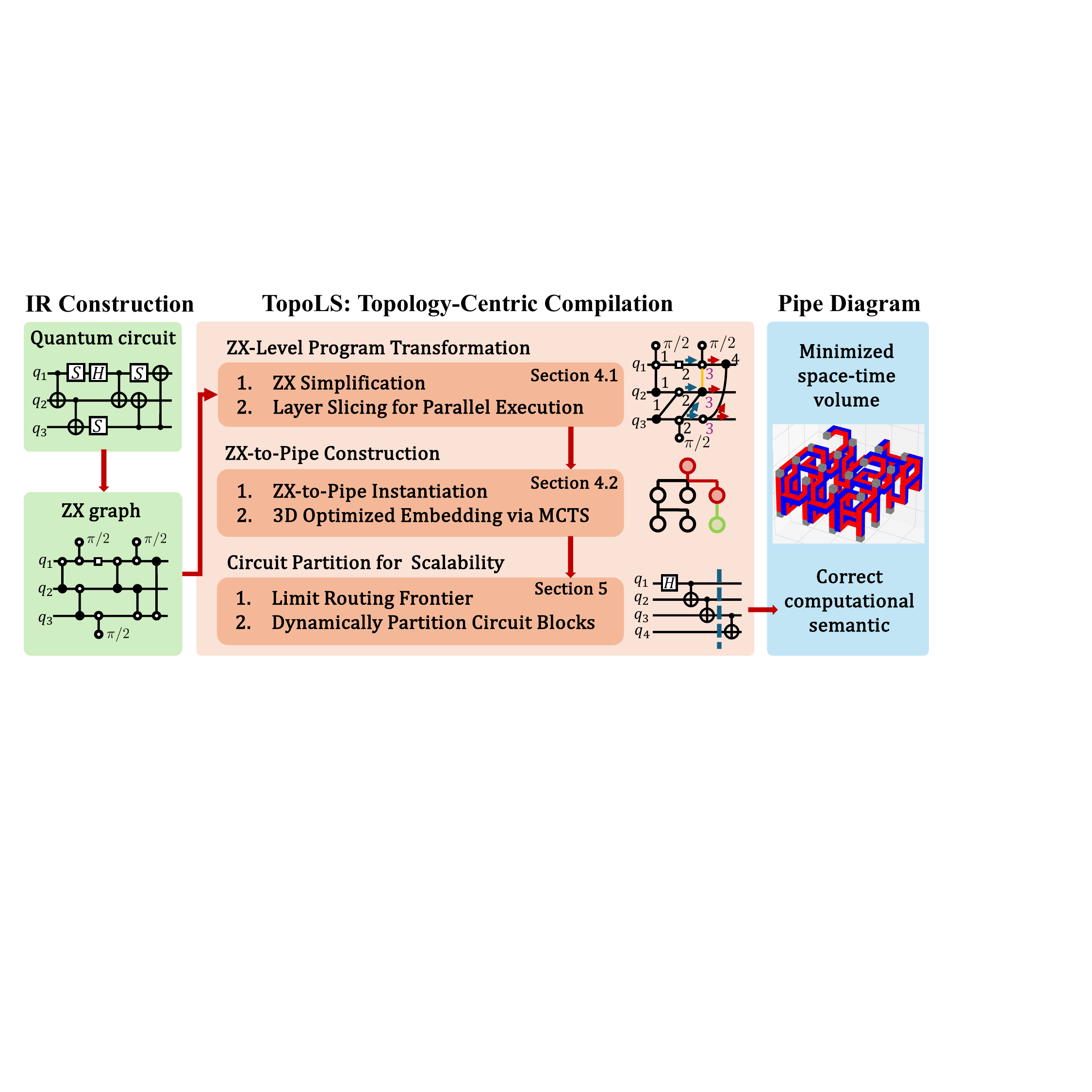}
    \caption{Overview of \myCompilerNameSpace workflow}
    \label{fig:Overview}
\end{figure*}

A major practical challenge in surface code-based quantum computing is implementing logical operations under the locality constraints of realistic hardware. 
In principle, the surface code supports transversal logical CNOT. 
However, in realistic hardware platforms with nearest-neighbor connectivity, such as superconducting qubits~\cite{kjaergaard2020superconducting, krantz2019quantum} and spin qubits~\cite{garcia2021semiconductor, veldhorst2017silicon}, the required nonlocal interactions are typically unavailable. 
As a result, logical operations must be realized through alternative fault-tolerant protocols that respect hardware locality constraints. 
Lattice surgery~\cite{horsman2012surface, fowler2018low, fowler2012time, gidney2019efficient} has emerged as one of the most practical such approaches, implementing logical operations through controlled merging and splitting of code patches. 
While effective, lattice surgery introduces substantial overhead: it imposes strict geometric constraints, requires precise coordination of measurement sequences, and incurs significant resource costs. 
These costs are most naturally captured by the \emph{space--time volume}~\cite{fowler2018low,herr2017lattice}, defined as the product of spatial footprint and execution time. 
Reducing space--time volume is therefore a primary optimization objective in fault-tolerant quantum computation.

%Compiling lattice surgery programs is fundamentally challenging due to the interplay of three tightly coupled factors.

In particular, we identify three key challenges of compiling and optimizing the lattice surgery program on the surface code list as follows.

\textbf{First, Representation Challenge.}
Lattice surgery is inherently \emph{topological}: its semantics are determined by the connectivity between code patches rather than by a fixed sequence of gates. 
ZX-calculus~\cite{coecke2011interacting} provides a natural formalism for capturing this structure and has been used to reason about lattice surgery constructions and verify correctness~\cite{de2020zx,gidney2024inplace,tan2024sat}. 
However, existing compiler frameworks remain largely circuit-centric~\cite{watkins2024high, molavi2025dependency}, and ZX-based approaches have primarily been used for analysis rather than as a practical intermediate representation for scalable compilation and optimization. 
As a result, current compilers do not fully exploit the flexibility of topological rewrites to reduce space--time volume.

\textbf{Second, Layout Synthesis Challenge.}
Given a topological description, constructing an efficient lattice surgery layout is a high-dimensional combinatorial optimization problem. 
Each logical operation must be embedded into a 3D space--time structure, where placement and routing decisions interact in complex ways. 
The search space grows rapidly with program size, making exact methods such as SAT-based formulations~\cite{tan2024sat} difficult to scale.

\textbf{Third, Scalability Challenge.}
The compiler must remain efficient for large programs. 
Naively exploring the full embedding space leads to exponential growth, while overly restrictive heuristics sacrifice optimization quality. 
Achieving both high-quality layouts and scalable compilation remains an open challenge.

In this paper, we present \myCompilerName, a topology-centric compiler that addresses the representation, synthesis, and scalability challenges of lattice surgery compilation through an integrated design. 
As illustrated in Figure~\ref{fig:Overview}, the compilation begins by constructing a ZX-diagram intermediate representation (IR) from the input circuit, which explicitly captures the connectivity of merge--split operations while abstracting away geometric constraints. 
\textbf{First,} on top of this IR, we perform \emph{ZX-level program transformations}, including spider fusion and topology-aware slicing, to simplify the diagram, reduce structural redundancy, and expose parallel merge--split opportunities that are not visible in circuit-centric representations. 
\textbf{Second,} Building on this IR, we formulate lattice surgery layout construction as a high-dimensional combinatorial optimization problem over placement and routing decisions in 3D space--time, and introduce a Monte Carlo Tree Search (MCTS)-based synthesis framework that incrementally constructs embeddings while exploring a large design space of possible layouts. 
This search-based formulation enables the compiler to jointly optimize spatial and temporal resources, capturing cross-layer tradeoffs that directly impact space--time volume. 
\textbf{Third,} to ensure scalability, we further develop \emph{topology-aware partitioning} techniques that decompose large ZX diagrams into bounded subproblems based on their connectivity structure, effectively limiting the size of the routing frontier in each stage of compilation and preventing the search space from growing unboundedly. 
Together, these components form an end-to-end compilation pipeline that integrates topological reasoning, search-based layout synthesis, and scalability-aware decomposition, enabling practical compilation of large lattice surgery programs with high-quality space--time layouts.

%Across a range of benchmarks, \myCompilerNameSpace achieves an average $46\%$ reduction in space--time volume over state-of-the-art baselines~\cite{watkins2024high, molavi2025dependency}, with improvements ranging from $25\%$ to $90\%$. 
%These gains are consistent across hardware configurations, with $35\%$ to $38\%$ reductions observed on grids of varying shapes and sizes. 
%In addition, \myCompilerNameSpace demonstrates strong empirical scalability, compiling large circuits with near-linear runtime growth across evaluated benchmarks.

Across evaluated benchmarks, \myCompilerNameSpace achieves an average $46\%$ reduction in space--time volume over state-of-the-art baselines~\cite{watkins2024high, molavi2025dependency}, with improvements ranging from $25\%$ to $90\%$. 
These gains remain consistent across the hardware configurations we study, with $35\%$ to $38\%$ reductions observed on grids of varying shapes and sizes. 
In addition, \myCompilerNameSpace shows strong empirical scalability: on the evaluated benchmark families, compilation time exhibits near-linear growth with circuit size.

The major contributions of this paper can be summarized as follows:
\begin{enumerate}
\item We present \myCompilerName, a topology-centric compiler for lattice surgery programs that leverages ZX diagrams as a practical intermediate representation to enable topological program transformations.
\item We develop an integrated compilation framework that combines ZX-level transformations, including spider fusion and topology-aware slicing, with MCTS-based 3D layout synthesis that jointly optimizes placement and routing in space--time.
\item We introduce topology-aware partitioning techniques that bound the routing frontier during compilation, enabling scalable synthesis of large lattice surgery programs.
\item We demonstrate that \myCompilerNameSpace achieves substantial reductions in space--time volume (46\% on average), with consistent improvements across different hardware configurations and strong empirical scalability across benchmark algorithms.
\end{enumerate}
\section{Background} \label{Sec:Preliminary}

This section provides background on lattice surgery for error-corrected quantum computing, together with an overview of ZX-calculus and Monte Carlo Tree Search (MCTS), which form the basis of our optimization methods. Additional details and in-depth discussions can be found in the references cited within each subsection. For general background of quantum computing, we refer readers to~\cite{nielsen2010quantum}.

\subsection{Rotated Surface Code} \label{Sec:RotatedSurfaceCode}

\begin{figure}[t]
    \centering
    \includegraphics[width=0.6\linewidth]{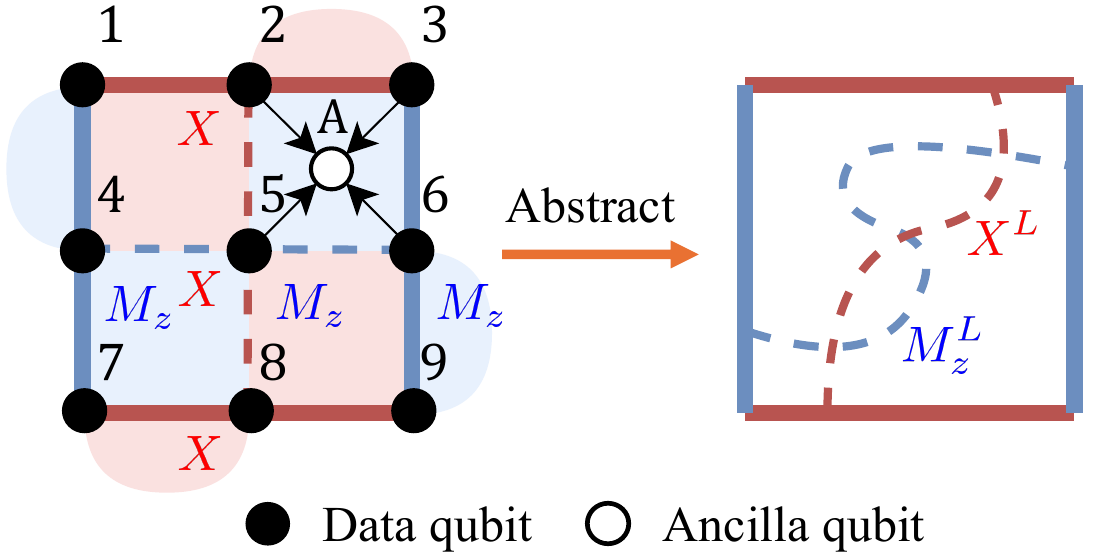}
    % \vspace{-5pt}
    \caption{Logical qubit and patch abstraction}
    % \vspace{-5pt}
    \label{fig:LogicalQubit}
\end{figure}

% \todo{do not talk about quantum computing basics, directly start from QEC/surface code}

%In quantum computing, individual physical qubits are highly susceptible to noise, making direct computation unreliable. To overcome this, multiple physical qubits are combined into a logical qubit using quantum error correction codes that are robust against errors. 

In this paper, we focus on the 2D rotated surface code~\cite{fowler2012surface,horsman2012surface} architecture. 
It is one of the leading technology roadmaps for fault-tolerant quantum computing due to its high error threshold, 2D nearest-neighbor interaction, and well-defined, efficient implementation of logical operations. 
Figure~\ref{fig:LogicalQubit} shows one logical qubit patch for a distance-3 rotated surface code.
This logical qubit is composed of nine data qubits. Ancilla qubits are employed to perform syndrome measurements for error detection, with different patch colors indicating different types of checks. For example, ancilla qubit A carries out the $Z_2Z_3Z_6Z_5$ measurement over four neighboring data qubits in the blue patch to detect bit-flip ($X$) errors, while the red patches correspond to multi-X measurements for detecting phase-flip ($Z$) errors. In this framework, the blue and red boundaries are determined by the type of syndrome measurement along the edges, and logical operations are represented by paths connecting opposite boundaries of the same color. For instance, applying a logical $X$ gate (commonly denoted with a superscript L, as $X^L$) can be achieved by applying physical $X$ gates to qubits 2, 5, and 8, while a logical $Z$-basis measurement can be implemented by measuring qubits 4, 5, and 6 in the $Z$ basis. In practice, logical $X^{L}$ and $Z^{L}$ operations are implemented in software~\cite{fowler2012surface}, while other logical operations will be discussed in later sections.
%One widely used code is the surface code~\cite{fowler2012surface}, which arranges physical qubits in a 2D grid and uses syndrome measurements to detect and correct errors. 

%In Figure~\ref{fig:LogicalQubit}, the logical qubit is composed of nine data qubits. Ancilla qubits are employed to perform syndrome measurements for error detection, with different patch colors indicating different types of checks. For example, ancilla qubit A carries out the $Z_2Z_3Z_6Z_5$ measurement over four neighboring data qubits in the blue patch to detect bit-flip ($X$) errors, while the red patches correspond to multi-X measurements for detecting phase-flip ($Z$) errors. In this framework, the blue and red boundaries are determined by the type of syndrome measurement along the edges, and logical operations are represented by paths connecting opposite boundaries of the same color. For instance, applying a logical $X$ gate (commonly denoted with a superscript L, as $X^L$) can be achieved by applying physical $X$ gates to qubits 2, 5, and 8, while a logical $Z$-basis measurement can be implemented by measuring qubits 4, 5, and 6 in the $Z$ basis. In practice, logical $X^{L}$ and $Z^{L}$ operations are implemented in software~\cite{fowler2012surface}, while other logical operations will be discussed in later sections.

Although large-scale realizations of surface code patches with larger code distance require more physical qubits to construct a single logical qubit—thereby increasing robustness—the inherent structure of colored boundaries and logical operators remains unchanged. As a result, detailed layouts are often replaced by a simplified patch abstraction, as illustrated on the right of the Figure~\ref{fig:LogicalQubit}. We will use this simplified representation for the following sections.

\subsection{Lattice Surgery on Surface Code} \label{Sec:LatticeSurgeryforSurfaceCode}

% \begin{wrapfigure}{R}{0.17\textwidth}
%     \centering
%     % \vspace{-10pt}
%     \includegraphics[width=0.17\textwidth]{fig/MergeSplit.pdf}
%     % \vspace{-10pt}
%     \caption{Merge and Split}
%     % \vspace{-10pt}
%     \label{fig:MergeSplit}
% \end{wrapfigure}

%One efficient method for executing a quantum program using the surface code error correction is through 
Lattice surgery~\cite{horsman2012surface} can enable efficient multi-logical-qubit operations on the rotated surface code, where it can be reasoned as merges and splits.
%Multi-qubit operations in lattice surgery are known as merge and split.
An example is shown in Figure~\ref{fig:MergeSplit}(a). When the blue ($Z$) boundaries of two logical qubits are merged, the product of their logical $X^L$ operators becomes connected and is measured as $X_1^LX_2^L$. The subsequent split operation separates the two logical qubits after this measurement. Similarly, a $Z_1^LZ_2^L$ measurement can be performed by merging and then splitting the red ($X$) boundaries. Further details are provided in~\cite{horsman2012surface}.

A CNOT gate can be implemented via multi-qubit measurements and, consequently, realized through lattice surgery, as illustrated in Figure~\ref{fig:MergeSplit}(b). In Step 1, an ancilla qubit $q_A$ is initialized in the $|0\rangle$ state, and an X-type multi-qubit measurement is performed between $q_2$ and $q_A$; based on the outcome, a conditional Pauli $Z$ gate is applied to $q_1$. Step 2 performs a $Z$-type multi-qubit measurement between $q_1$ and $q_A$, followed by a conditional Pauli $X$ gate on $q_2$. Finally, the ancilla is measured in the $X$ basis, and a correction is applied to $q_1$. As noted in Section~\ref{Sec:RotatedSurfaceCode}, the red-highlighted corrections can be implemented in software via Pauli frame updates, rather than as physical gate operations.

% \begin{figure}[t]
%     \centering
%     \includegraphics[width=0.6\linewidth]{fig/MergeSplit.pdf}
%     % \vspace{-5pt}
%     \caption{Lattice surgery and pipe diagram}
%     % \vspace{-5pt}
%     \label{fig:MergeSplit}
% \end{figure}

\subsection{Pipe Diagram for Lattice Surgery} \label{Sec:PipeDiagramforLatticeSurgery}

\begin{figure}[t]
    \centering
    \includegraphics[width=1.0\linewidth]{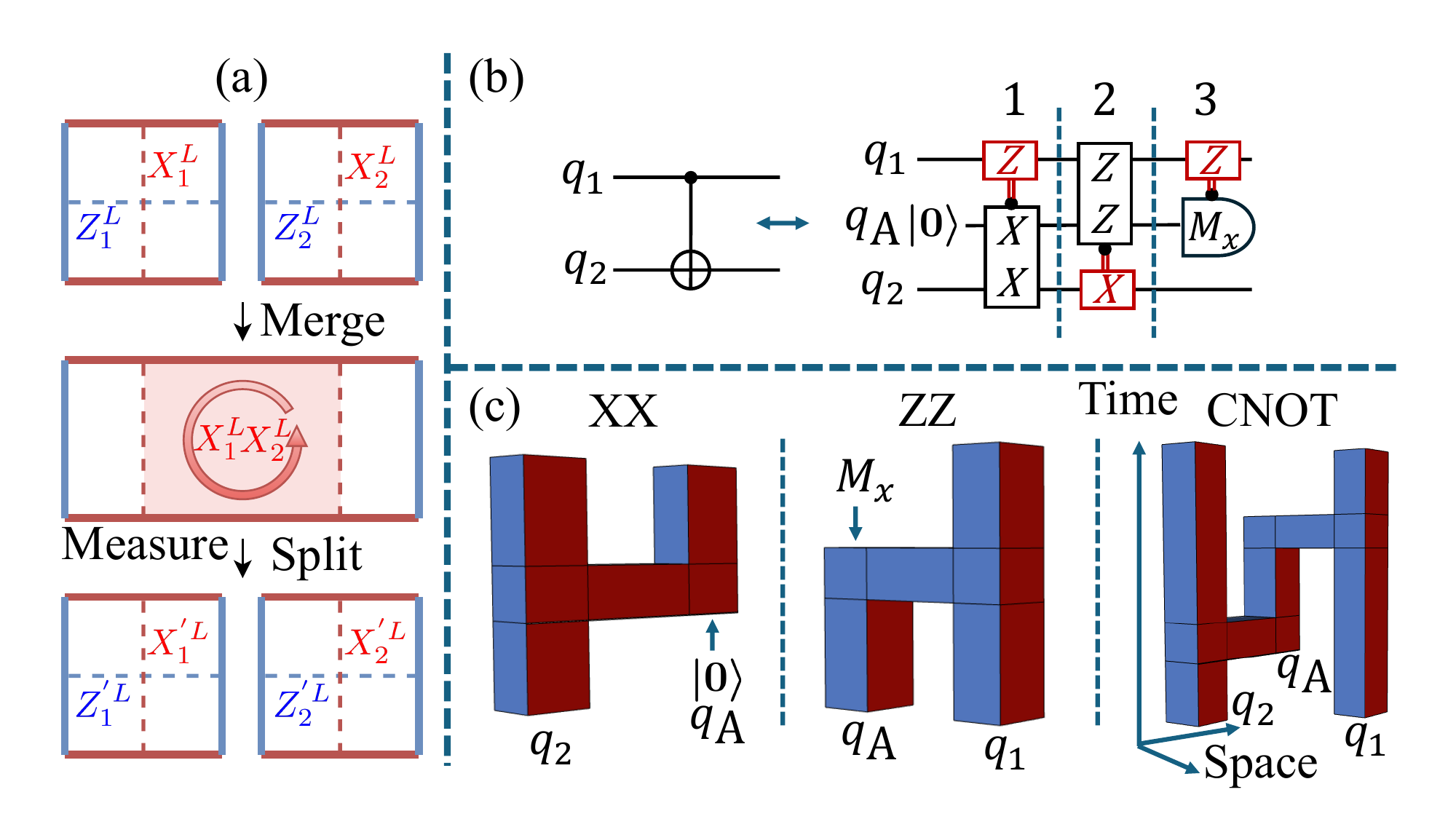}
    % \vspace{-5pt}
    \caption{Lattice surgery and pipe diagram}
    % \vspace{-5pt}
    \label{fig:MergeSplit}
\end{figure}

The execution of lattice surgery operations is commonly represented using pipe diagrams~\cite{tan2024sat}. The pipe diagram represents the trajectory traced by a logical qubit patch in space and time during execution. 
In Figure~\ref{fig:MergeSplit}(c), we illustrate the execution of a CNOT gate in the pipe diagram as an example. First, an $XX$ measurement is performed between $q_2$ and $q_A$, causing the blue boundaries of these two logical qubits to merge and then split. The resulting space–time trajectory is shown in left of Figure~\ref{fig:MergeSplit}(c), where the blue boundaries of the two logical qubits disappear and their red boundaries become connected. The ancilla qubit has no prior history in time, as it is initialized to $|0\rangle$ immediately before the merge–split operation. Next, a $ZZ$ measurement is performed between $q_1$ and $q_A$, merging and then splitting their red boundaries. The corresponding space–time trajectory is shown in middle of Figure~\ref{fig:MergeSplit}(c), where the red boundaries of the two logical qubits vanish and their blue boundaries are connected. The ancilla qubit has no subsequent history in time, as it is measured immediately after the merge–split operation. Concatenating these two operations yields the space–time pipe diagram for a CNOT gate, shown in right of Figure~\ref{fig:MergeSplit}(c). The \textbf{space–time volume}, which estimates the physical resources required for program execution, is calculated from the product of the space ($2 \times 2$) and the time steps ($2$), giving a total of $8$ in this example.

\iffalse 
\begin{figure}[t]
    \centering
    \includegraphics[width=0.8\linewidth]{fig/CNOT_Pipe.pdf}
    % \vspace{-5pt}
    \caption{CNOT Gate and Its Pipe Diagram\jz{}}
    % \vspace{-5pt}
    \label{fig:CNOT_Pipe}
\end{figure}
\fi

\subsection{ZX-calculus} \label{Sec:ZXCalculus}

ZX-calculus is a graphical language for representing and reasoning about quantum computations~\cite{coecke2011interacting}. A ZX diagram represents a quantum process as a network of spiders and edges whose semantics correspond to linear maps on quantum states. Given fixed input and output boundary wires, a ZX diagram $G$ with $n$ inputs and $m$ outputs defines a linear map

\[
\llbracket G \rrbracket : (\mathbb{C}^2)^{\otimes n} \rightarrow (\mathbb{C}^2)^{\otimes m}.
\]

% \begin{figure}[t]
%     \centering
%     \includegraphics[width=0.7\linewidth]{fig/ZX.pdf}
%     % \vspace{-5pt}
%     \caption{ZX-calculus examples}
%     % \vspace{-5pt}
%     \label{fig:ZX}
% \end{figure}

\textbf{Spiders.} The basic building blocks in ZX calculus are spider nodes representing specific tensor networks. There are two types: $Z$-spider (white) and $X$-spider (black). As shown in Figure~\ref{fig:ZX}(a), a $Z$-spider with phase $\alpha$ corresponds to a tensor network expressed in the $Z$ basis, while an $X$-spider with phase $\alpha$ corresponds to a tensor network in the $X$ basis. The number of wires attached to the right and left of a spider matches the number of qubits appearing in the ket and bra components of its associated tensor network respectively. Spiders with a single connected wire correspond to either initialization or measurement operations. Additionally, two-wire spiders with phase $0$ represent the identity. 

\begin{figure}[t]
    \centering
    \includegraphics[width=1.0\linewidth]{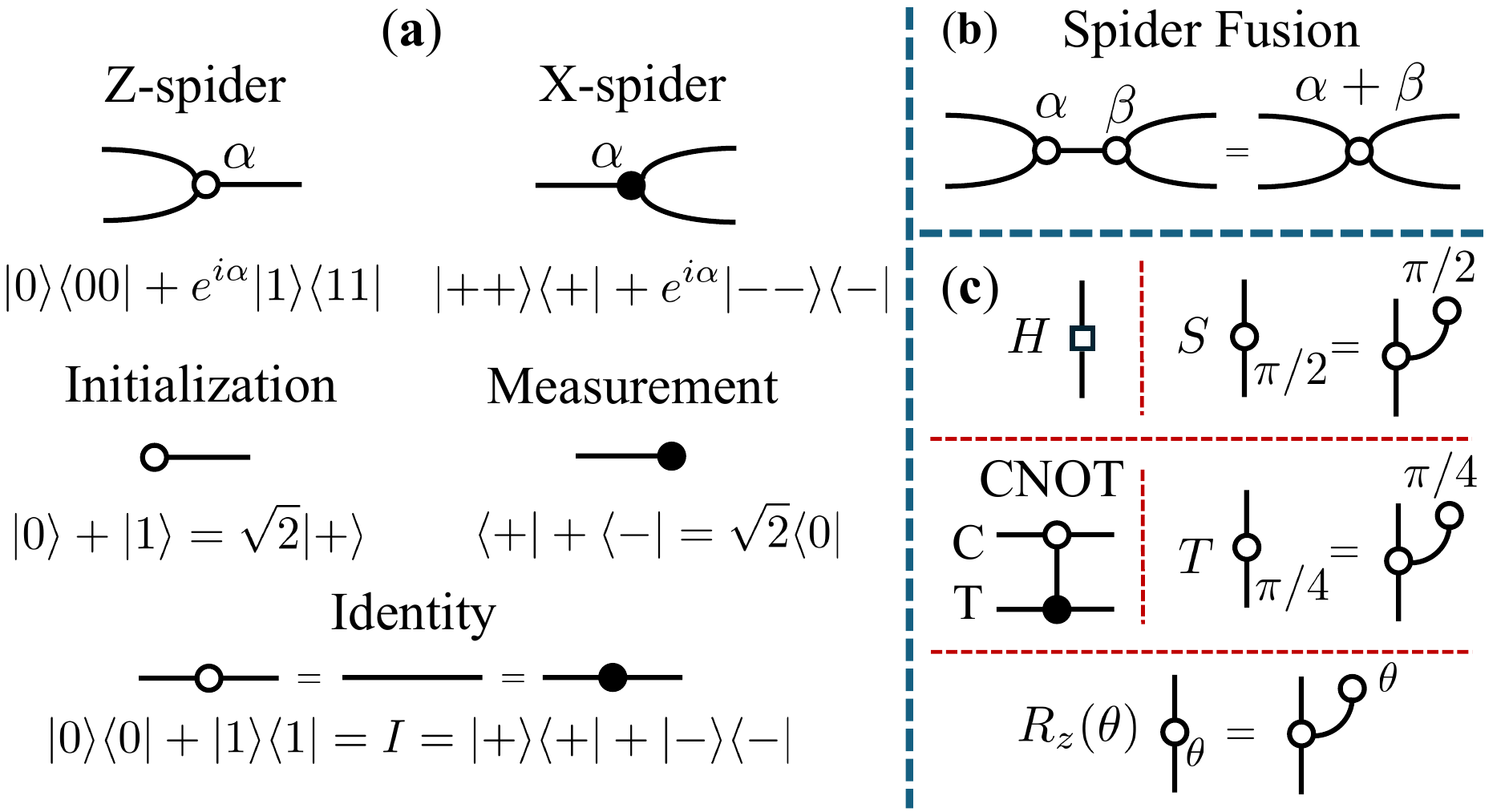}
    % \vspace{-5pt}
    \caption{ZX-calculus examples}
    % \vspace{-5pt}
    \label{fig:ZX}
\end{figure}

\textbf{Spider Fusion.} Spiders of the same type can be merged into a single spider, with the resulting phase equal to the sum of their original phases, as illustrated in Figure~\ref{fig:ZX}(b). This rule preserves the semantics of the represented quantum operation and enables algebraic simplifications in ZX diagrams.

\textbf{Quantum Gate in ZX-calculus.} Common quantum gates can be expressed using ZX diagrams. Figure~\ref{fig:ZX}(c) illustrates the ZX representations of several gates. The Hadamard ($H$) gate is represented as a box. The S gate corresponds to a $Z$-spider with phase $\pi/2$. It can equivalently be represented as a $Z$-spider connected to an auxiliary phase-$\pi/2$ node using spider fusion, making explicit its interpretation as a $Y$-basis measurement or initialization. For the $T$ gate or the $R_Z$ gate, the single auxiliary node corresponds to a state-injection process. The $X$ and $Z$ gates are omitted here, as they are executed in software.

\subsection{Monte Carlo Tree Search} \label{MonteCarloTreeSearch}
In this section, we introduce the concept of Monte Carlo Tree Search (MCTS)~\cite{browne2012survey, kocsis2006bandit}, which will later be employed to optimize the space–time volume of the pipe diagram.

\begin{figure}[t]
    \centering
    \includegraphics[width=1.0\linewidth]{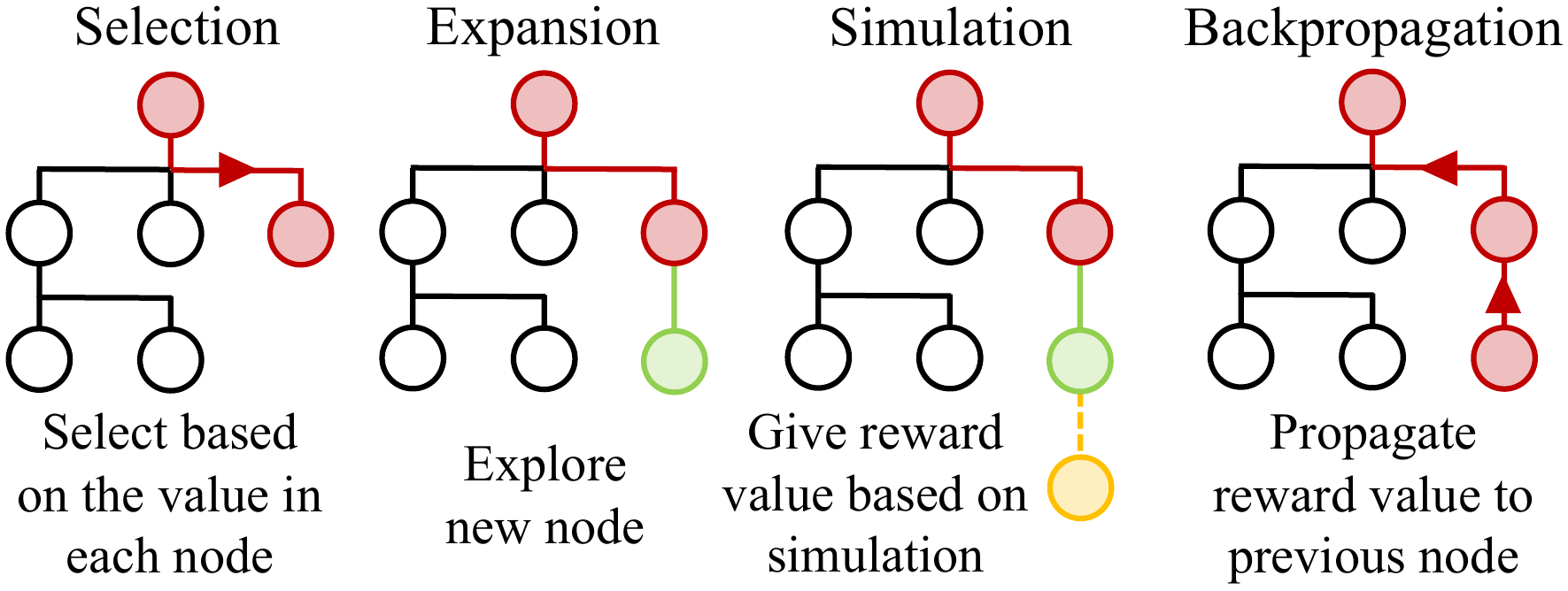}
    % \vspace{-5pt}
    \caption{Overview of Monte Carlo Tree Search}
    % \vspace{-5pt}
    \label{fig:MCTS}
\end{figure}

MCTS is a heuristic search algorithm widely used in decision making problems with large search space. It incrementally builds a search tree by iteratively simulating possible future outcomes, balancing exploration (trying less-visited actions) and exploitation (focusing on actions with high estimated value). Each iteration consists of four stages: \textbf{Selection}, where the algorithm traverses the current tree using a policy such as Upper Confidence Bound for Trees (UCT) to select a promising node; \textbf{Expansion}, where one or more new child nodes are added to the tree; \textbf{Simulation}, where the outcome of the problem is estimated via random or policy-guided playouts; and \textbf{Backpropagation}, where the results of the simulation are propagated up the tree to update value estimates. We demonstrate the four steps in Figure~\ref{fig:MCTS}.

% \begin{figure}[t]
%     \centering
%     \includegraphics[width=0.7\linewidth]{fig/MCTS.pdf}
%     % \vspace{-5pt}
%     \caption{Overview of Monte Carlo tree search}
%     % \vspace{-5pt}
%     \label{fig:MCTS}
% \end{figure}

MCTS offers a powerful balance between exploration and exploitation, enabling it to navigate large and complex decision spaces without requiring domain-specific heuristic evaluation functions. It progressively focuses on high-value solutions while retaining the flexibility to discover new possibilities. The anytime property ensures that a workable solution is available early and can be refined with additional computation, making it particularly advantageous for optimization tasks such as reducing the space–time volume in pipe diagrams, where the search space is vast and the optimal solution is difficult to determine analytically.

\section{Motivation and Problem Formulation} \label{Sec:ProblemFormulation}

In this section, we motivate the design of \myCompilerNameSpace and formulate our optimization problem. We adopt ZX diagrams as an intermediate representation (IR) for lattice surgery compilation, as they expose the topological structure of logical operations more naturally than gate-based circuits. Based on this representation, we aim to construct pipe diagrams with minimized space--time volume.

\subsection{Motivation of \myCompilerName} \label{Motivation of TopoLS}

\begin{figure}[t]
    \centering
    \includegraphics[width=0.6\linewidth]{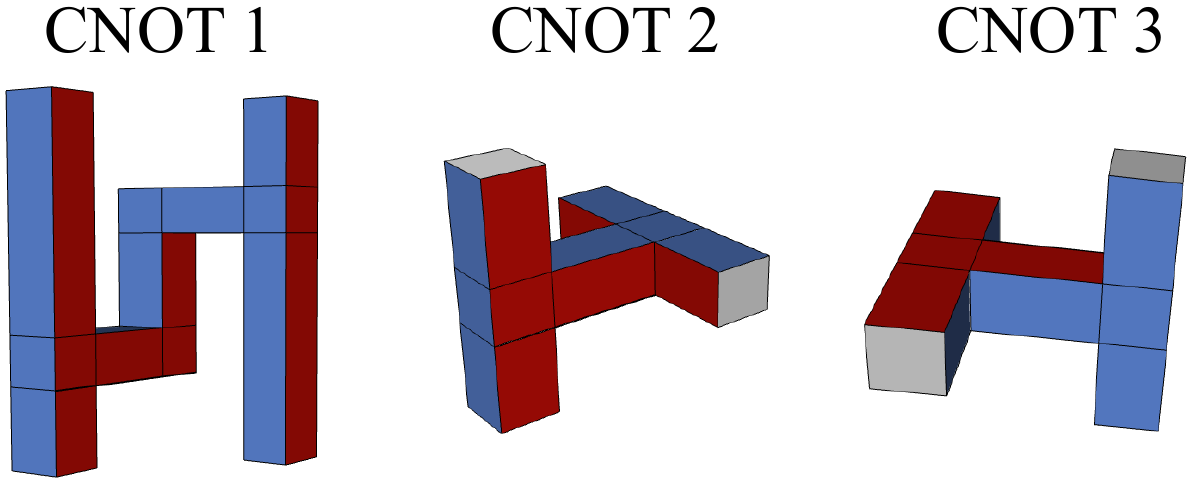}
    \caption{Different implementations of CNOT}
    \label{fig:CNOT_topo}
\end{figure}

\textbf{Characteristics of Lattice Surgery and Limitation of Circuit-Centric Compilation.}
Lattice surgery performs logical operations through merge–split procedures between surface code patches, where computations are defined by the topological connections among patches rather than sequential gates. As a result, the same logical computation can often be realized through multiple spatial and temporal layouts of merge–split operations. Figure~\ref{fig:CNOT_topo} illustrates that three different layouts implement the same CNOT operation. However, existing circuit-centric compilers typically derive lattice surgery implementations directly from gate-based circuits with predetermined layouts. As a result, each CNOT is mapped to a fixed canonical layout (e.g., CNOT 1 in Figure~\ref{fig:CNOT_topo}), while alternative implementations with smaller space--time volume are not explored.

% However, existing circuit-centric compilers typically derive lattice surgery implementations directly from gate-based circuits with predetermined layouts. As a result, only the circuit-derived implementation (e.g., CNOT~1 in Figure~\ref{fig:CNOT_topo}) is used, while alternative implementations with smaller space–time volume are not explored.

\begin{figure}[t]
    \centering
    \includegraphics[width=0.5\linewidth]{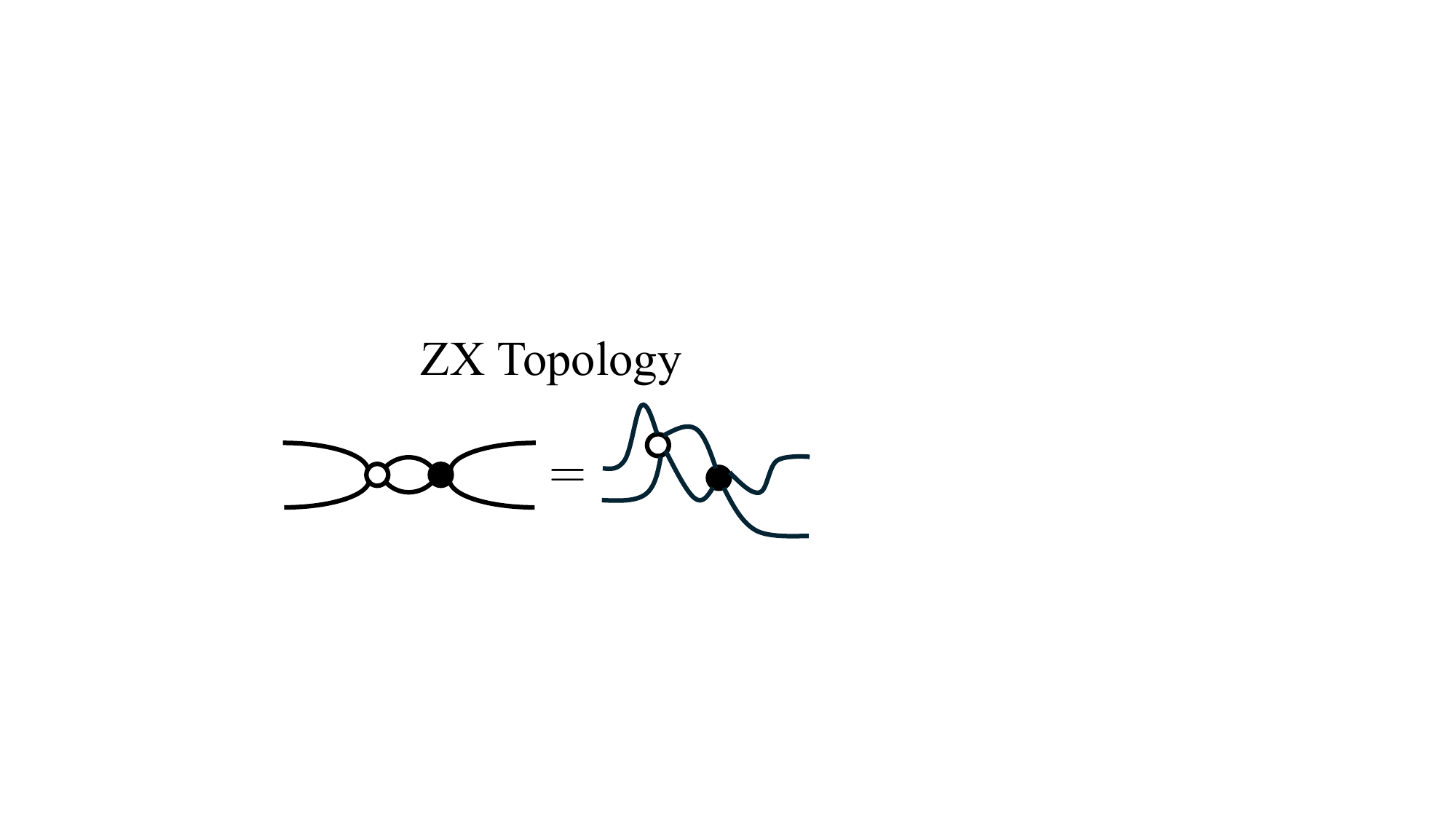}
    % \vspace{-5pt}
    \caption{Topological property of ZX diagram}
    % \vspace{-5pt}
    \label{fig:ZX_topo}
\end{figure}

\textbf{ZX-calculus as an Intermediate Representation.}
ZX-calculus provides a graphical language for representing quantum computations~\cite{duncan2020graph,kissinger2019reducing,peham2022equivalence}. 
Recent work has shown that ZX diagrams can represent lattice surgery operations~\cite{de2020zx}, assist in designing fault-tolerant layouts~\cite{gidney2024inplace}, and verify lattice surgery constructions~\cite{tan2024sat}. 
In ZX diagrams, computations are represented as graphs of spiders and edges that capture only the connectivity between operations.

Importantly, ZX diagrams are invariant under geometric deformation: wires can be stretched, bent, or rearranged without changing the represented computation, an example is shown in Figure~\ref{fig:ZX_topo}. 
This topological property aligns naturally with lattice surgery, where merge–split junctions may be connected through different spatial layouts.
In particular, merge–split junctions in pipe diagrams correspond directly to ZX spiders—blue junctions to $Z$ spiders and red junctions to $X$ spiders—as illustrated in Figure~\ref{fig:ZX_Pipe}(a). 
Under this correspondence, a pipe diagram can be interpreted as a ZX diagram. 
Figure~\ref{fig:ZX_Pipe}(b) further illustrates this mapping using a CNOT example: translating the junctions in the pipe diagram into ZX spiders produces a ZX graph that is equivalent to the ZX representation of the original CNOT circuit.
% Because ZX diagrams specify only the connectivity between spiders while abstracting away geometric details, different spatial connections between the same spiders correspond to equivalent computations. 
% From the compiler’s perspective, constructing a lattice surgery layout therefore becomes a \textbf{routing problem} over the ZX diagram, namely finding feasible connection paths between junctions that satisfy the pipe constraints.

\begin{figure}[t]
    \centering
    \includegraphics[width=1.0\linewidth]{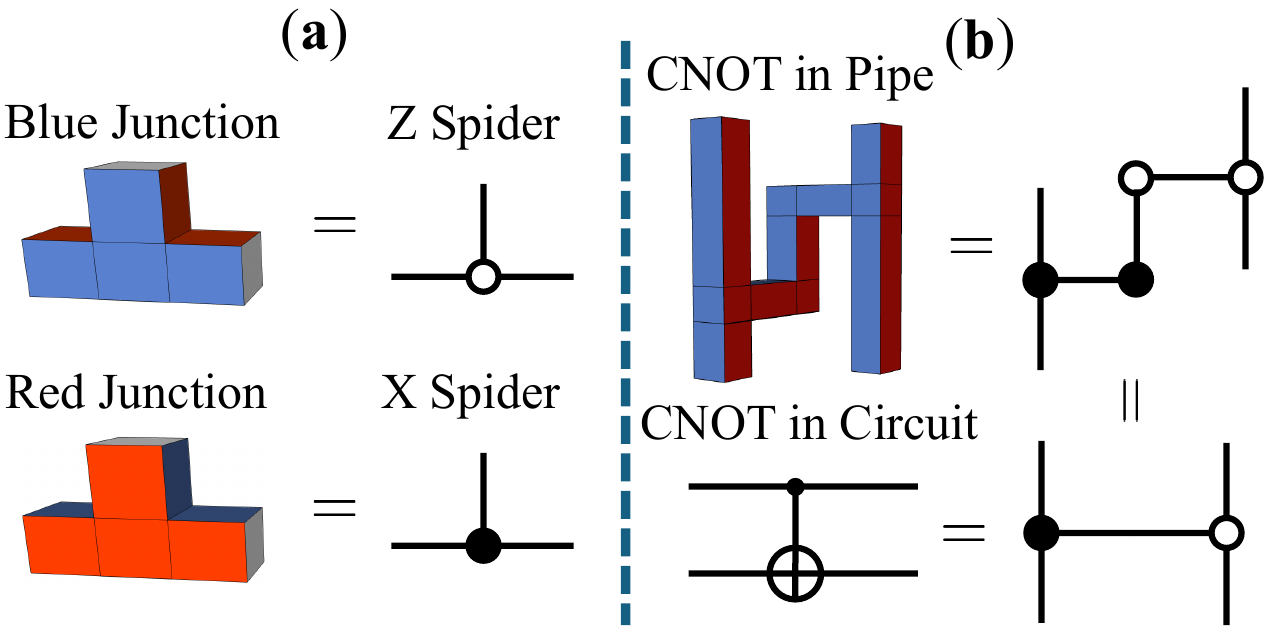}
    \caption{Pipe diagram to ZX-calculus}
    \label{fig:ZX_Pipe}
\end{figure}

With this correspondence, we can revisit the multiple implementations of a CNOT operation shown earlier in Figure~\ref{fig:CNOT_topo} from the perspective of ZX diagrams. 
In ZX representation, a CNOT is represented by a $Z$ spider connected to an $X$ spider. 
Since ZX diagrams specify only the connectivity between spiders and abstract away geometric details, any spatial connection between the corresponding spiders represents the same computation. 
In the pipe diagram, this corresponds to connecting the blue and red junctions, allowing multiple pipe layouts to implement the same CNOT operation. From the compiler's perspective, selecting a feasible spatial path between these junctions becomes a \textbf{routing problem}.

\subsection{Optimization Opportunities} \label{Sec:Optimization Opportunities}

The inherent topological consistency between ZX diagrams and pipe diagrams gives rise to multiple optimization opportunities.

% \begin{wrapfigure}{R}{0.5\textwidth}
%     \centering
%     \includegraphics[width=0.5\textwidth]{fig/gate.pdf}
%     % \vspace{-5pt}
%     \caption{Lattice surgery pipe to ZX graph}
%     % \vspace{-5pt}
%     \label{fig:Gate}
% \end{wrapfigure}

% \textbf{First}, the ZX diagram provides a representation of the program at the topological level, enabling optimizations that are not easily accessible in the gate-based model. For example, spider fusion can be applied to combine merge–split operations, as illustrated in middle of Figure~\ref{fig:ZX_opp}, or the execution order of merge–split operations can be rearranged to enhance parallelism, see right of Figure~\ref{fig:ZX_opp}. Such optimizations arise from the fact that lattice surgery, like to ZX diagrams, is agnostic to routing details; consequently, structural transformations applied at the ZX-diagram level directly translate into high-level modifications of the lattice surgery protocol.

\textbf{First}, the ZX diagram provides a representation of the program at the topological level, enabling optimizations that are difficult to expose in the gate-based model. For example, spider fusion can combine multiple merge–split operations when they correspond to spiders of the same type, as illustrated in the middle of Figure~\ref{fig:ZX_opp}. In addition, the execution order of merge–split operations can often be rearranged based on graph connectivity to enhance parallelism, as shown on the right of Figure~\ref{fig:ZX_opp}. 
These optimizations arise from the fact that lattice surgery, like ZX diagrams, is defined by topological connectivity rather than by fixed routing paths. As a result, structural transformations applied at the ZX-diagram level—such as merging compatible operations or reorganizing their execution order—can directly translate into more efficient lattice surgery protocols without changing the semantics of the computation.

\begin{figure}[t]
    \centering
    \includegraphics[width=1.0\linewidth]{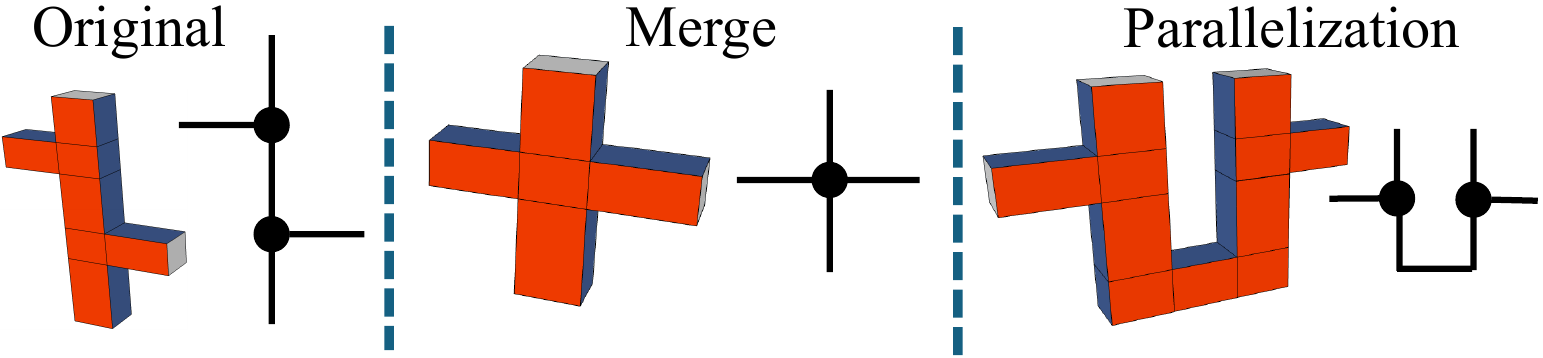}
    % \vspace{-5pt}
    \caption{Optimization opportunity from ZX representation}
    % \vspace{-5pt}
    \label{fig:ZX_opp}
\end{figure}

% \textbf{Second}, the topological nature of lattice surgery allows us to explore a much larger solution space for the routing problem. Prior work has primarily addressed routing by considering only a 2D slice of the quantum circuit, overlooking the inherently 3D structure and topological properties of lattice surgery. This distinction is illustrated in Figure~\ref{fig:2D3D}. In the example circuit with two CNOT gates, restricting routing to the 2D circuit slice (where yellow nodes denote data qubits and green nodes denote ancilla qubits) prevents both CNOTs from being executed simultaneously. In contrast, when routing is extended into the full 3D space, spare regions from earlier execution steps can be reused to establish the necessary topological connections, enabling more efficient scheduling.

\textbf{Second}, the topological nature of lattice surgery allows us to explore a much larger solution space for the routing problem. Prior work has primarily addressed routing by considering only a 2D slice of the quantum circuit, overlooking the inherently 3D structure and topological properties of lattice surgery. This distinction is illustrated in Figure~\ref{fig:2D3D}. In the example circuit with two CNOT gates, restricting routing to the 2D circuit slice (where yellow nodes denote data qubits and green nodes denote ancilla qubits) prevents both CNOTs from being executed simultaneously. 
In contrast, when routing is extended into the full 3D space, spare regions from earlier execution steps can be reused to establish the necessary topological connections, enabling more efficient scheduling. Intuitively, the time dimension introduces additional routing freedom, allowing operations that would otherwise conflict in a 2D layout to be separated in space–time. By exploiting this additional degree of freedom, the compiler can discover layouts that achieve higher parallelism and reduce the overall space–time volume of the lattice surgery program.

\begin{figure}[t]
    \centering
    \includegraphics[width=1.0\linewidth]{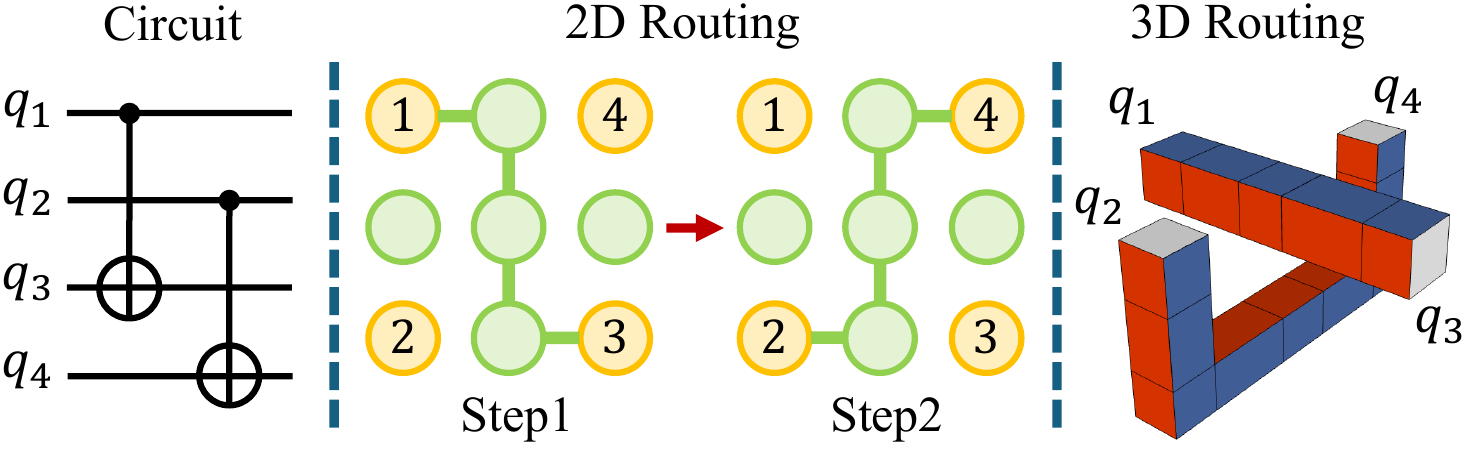}
    % \vspace{-5pt}
    \caption{2D routing vs 3D routing}
    % \vspace{-5pt}
    \label{fig:2D3D}
\end{figure}

\begin{example}[Topology-Centric Optimization Example]

We use the example in Figure~\ref{fig:run} to illustrate how topology-centric compilation can produce more efficient lattice surgery layouts than circuit-centric approaches.

Under circuit-centric compilation, each CNOT is implemented using a fixed merge–split pattern that requires two time steps. Consequently, executing the two CNOTs sequentially takes four time steps and requires two additional ancilla logical qubits.

In contrast, after transforming the circuit into a ZX diagram, the two CNOT operations can be simplified through spider fusion on their target spiders. This transformation reveals a more compact structure, allowing the lattice surgery layout to be executed within a single time step and without introducing additional ancilla logical qubits in the resulting 3D pipe diagram.

\begin{figure}[t]
    \centering
    \includegraphics[width=0.9\linewidth]{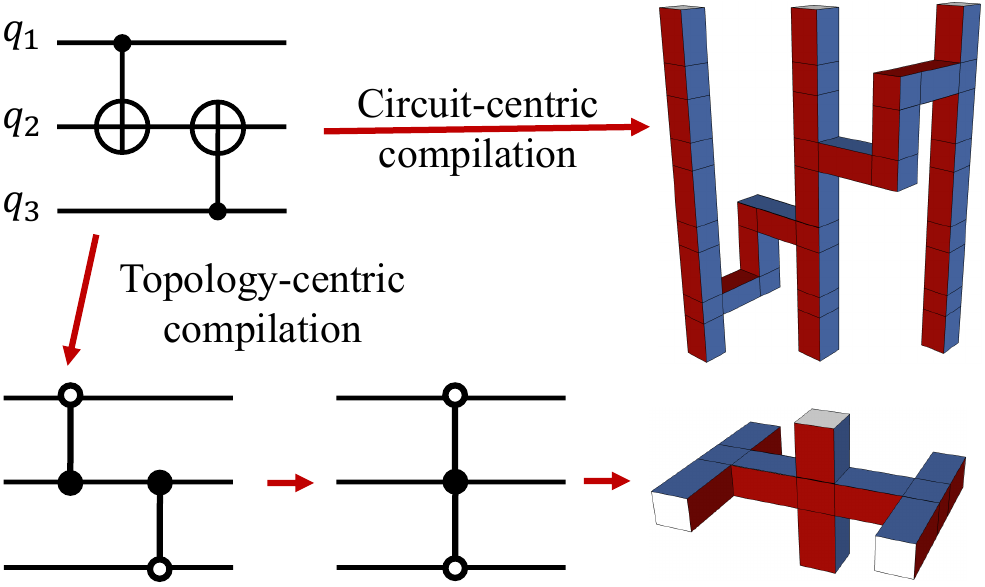}
    \caption{Running example}
    \label{fig:run}
\end{figure}
\end{example}

\subsection{Problem Formulation} \label{Problem Formulation}

We now formalize the lattice surgery compilation task as an optimization problem. 
Given a quantum circuit $C$, our goal is to construct a lattice surgery pipe diagram $P$ that implements the same logical computation while minimizing its space–time volume.

\textbf{ZX-Based Semantic Representation.}
We use ZX diagrams as the intermediate representation that connects quantum circuits and pipe diagrams.
Let
\[
\mathcal{T}_C : C \rightarrow G_C
\]
denote the translation from a quantum circuit to its ZX diagram representation, as defined in Section~\ref{Sec:ZXCalculus} and illustrated in Figure~\ref{fig:ZX}(c).
This translation maps circuit gates to ZX spiders and edges while preserving the input and output boundary wires and their ordering.

Similarly, a pipe diagram can be interpreted as a ZX diagram through the correspondence between pipe primitives and ZX nodes. 
Figure~\ref{fig:ZX_Pipe} illustrates the basic mapping between merge–split junctions and ZX spiders, while additional mappings for other operations are introduced later in Figure~\ref{fig:Gate} in Section~\ref{Sec:ZX-to-Pipe Instantiation}.
We denote this pipe interpretation mapping as:
\[
\mathcal{T}_P : P \rightarrow G_P .
\]
This interpretation proceeds in two steps. First, each pipe primitive in the pipe diagram is translated into the corresponding ZX spider according to the mappings defined in Figure~\ref{fig:ZX_Pipe} and Figure~\ref{fig:Gate}. Second, the connections between pipe primitives are translated into edges between the corresponding spiders, preserving the connectivity structure of the pipe diagram. As a result, the entire pipe diagram can be represented as a ZX graph $G_P$ whose nodes correspond to pipe primitives and whose edges correspond to pipe connections.
This translation also preserves the input and output boundary ports of the pipe diagram and their ordering, ensuring that the resulting ZX diagram defines the same linear map semantics.

\textbf{Semantic Correctness.}
A pipe diagram $P$ correctly implements the circuit $C$ if the ZX diagrams obtained from the two representations define the same linear map. Formally, let
\[
G_C = \mathcal{T}_C(C), \qquad
G_P = \mathcal{T}_P(P).
\]
The compilation is correct if
\[
\llbracket G_P \rrbracket = \llbracket G_C \rrbracket ,
\]
where $\llbracket \cdot \rrbracket$ denotes the linear-map semantics of a ZX diagram defined in Section~\ref{Sec:ZXCalculus}. 
Two ZX diagrams may differ structurally while representing the same linear map; such diagrams are considered equivalent under the rewrite rules of ZX-calculus.

We now give the formal definition of the pipe diagram:

\begin{definition}[Pipe Diagram Representation]

A pipe diagram is represented using a set of cubes and pipes embedded in space and time.

We distinguish between two types of cubes.

\textbf{Standard Cubes.} A standard cube $v$, corresponding to a Z or X spider, is associated with the following variables:
\begin{itemize}
\item a space--times \textbf{position} $p_v: (x_v, y_v, z_v)$,
\item an \textbf{orientation} $o_v \in \{x,y,z\}$ indicating the direction in which the cube cannot be connected by pipes,
\item and a \textbf{color} assignment $c_v\in \{blue, \ red\}$ along its orientation direction.
\end{itemize}

\textbf{Hadamard Cubes.}
In addition, we introduce a special type of cube corresponding to the Hadamard operation in Figure~\ref{fig:Gate}. 
A Hadamard cube $v$ is associated only with a position variable $p_v : (x_v, y_v, z_v)$ and does not have orientation or color assignments.

Given the set of cubes, a pipe $e$ is modeled as a unit-length line segment connecting two cubes. 
Formally, a pipe is defined by the following variables:
\begin{itemize}
\item two endpoint cubes $s_e$ and $t_e$ that the pipe connects,
\item a \textbf{direction} vector $ d_e = |p_{t_e} - p_{s_e}|$,
\item a \textbf{blue color orientation} $b_e$ and a \textbf{red color orientation} $r_e$, indicating the orientations perpendicular to the pipe's blue and red boundaries.
\end{itemize}
\end{definition}

\begin{figure}[t]
    \centering
    \includegraphics[width=0.7\linewidth]{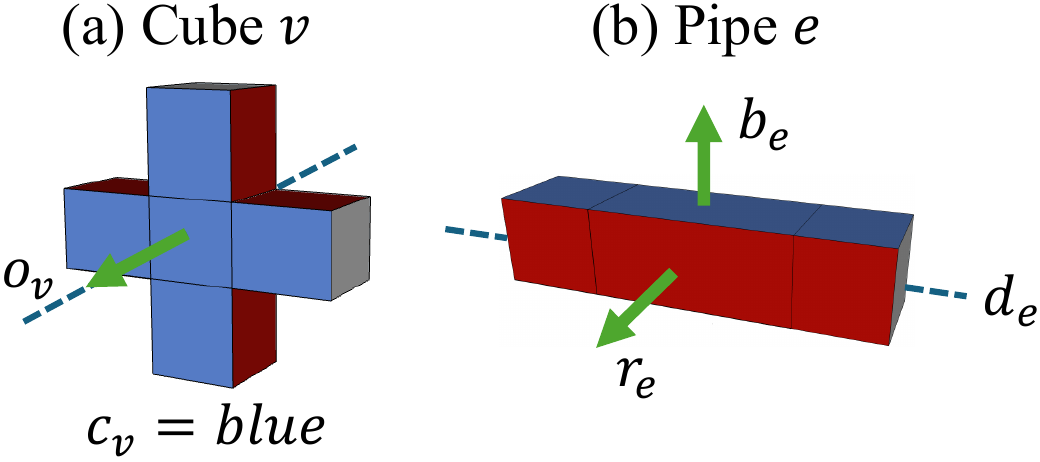}
    % \vspace{-5pt}
    \caption{Pipe diagram variables}
    % \vspace{-5pt}
    \label{fig:pip_rep}
\end{figure}

Figure~\ref{fig:pip_rep} illustrates an example of the pipe diagram variables, including a cube with its orientation and color, and a pipe with its direction and color orientations.

\textbf{Constraints.}
Although lattice surgery allows flexible routing in space–time, not every spatial path corresponds to a valid pipe diagram. 
Valid pipe diagrams need to satisfy several structural constraints, examples are illustrated in Figure~\ref{fig:Pipe_R}. These examples illustrate the following constraints:

\begin{enumerate}
\item All ports of a junction must lie in the same plane, and each junction may have at most four ports,
\item The colors at a junction must be consistent,
\item Three-dimensional corners are not permitted.
\end{enumerate}

Let $\mathcal{P}_{valid}$ denote the set of all valid pipe diagrams. 
We now formalize the above structural constraints. 
A pipe diagram $P \in \mathcal{P}_{valid}$ if it satisfies the following constraints.

% \begin{enumerate}
% \item All ports of a junction must lie in the same plane, and each junction may have at most four ports,
% \item The colors at a junction must be consistent,
% \item Three-dimensional corners are not permitted.
% \end{enumerate}

\begin{figure}[t]
    \centering
    \includegraphics[width=1.0\linewidth]{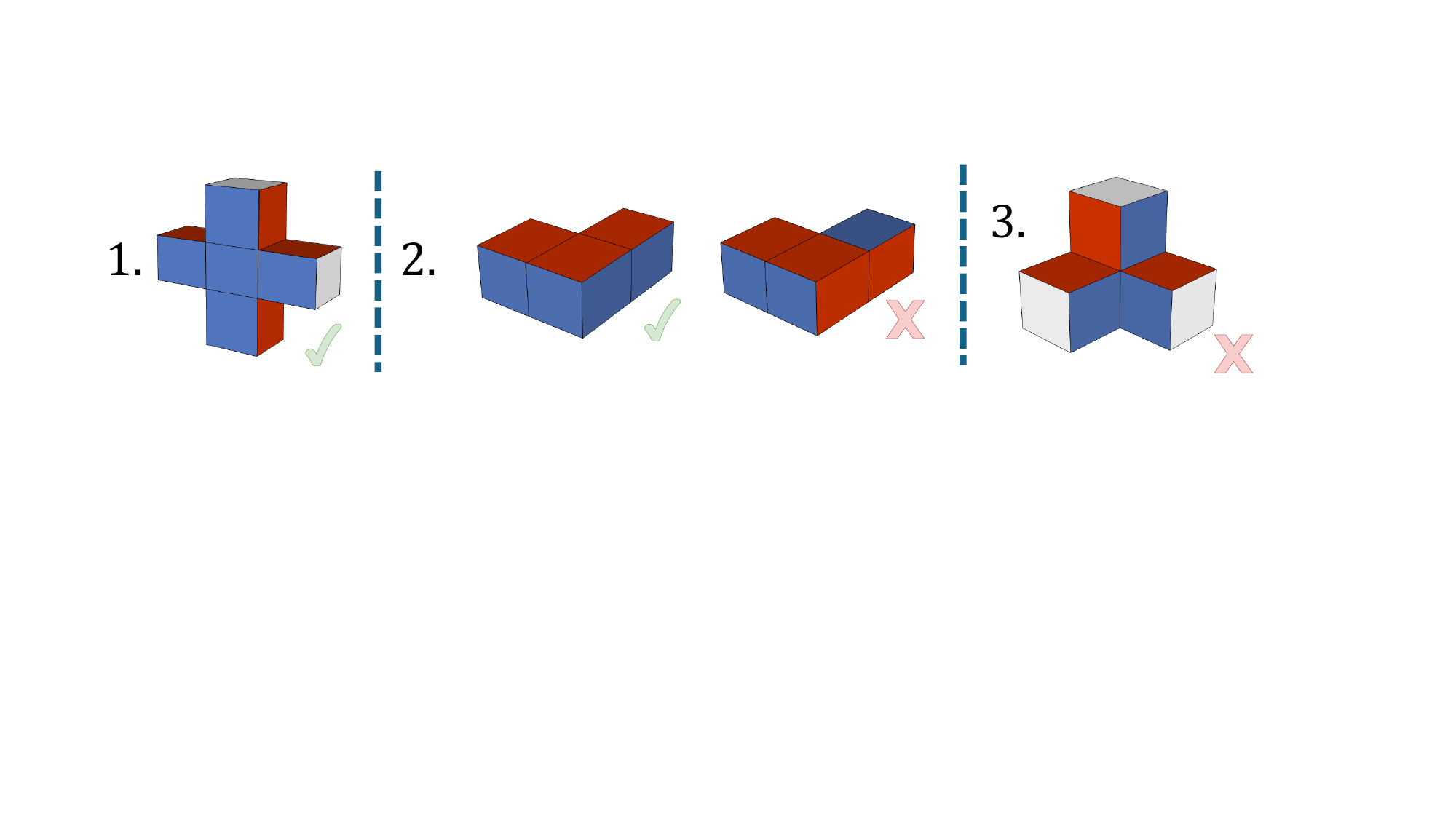}
    % \vspace{-5pt}
    \caption{Pipe restrictions}
    \vspace{-5pt}
    \label{fig:Pipe_R}
\end{figure}

\begin{itemize}

\item \textbf{Direction Constraint.}
For a pipe $e$ connecting a standard cube $v$, the pipe direction must be perpendicular to the orientations $o_{v}$. 
This constraint ensures that Restrictions~1 and~3 in Figure~\ref{fig:Pipe_R} are satisfied. Formally, this requirement can be written as $d_e \cdot o_{v} = 0$.

\item \textbf{Color Consistency Constraint.}
For each pipe $e$ connecting standard cube $v$, the cube color must match the pipe color along the cube orientation direction: $c_v = c(e, o_v)$.
Here $c(e,d)$ denotes the color of pipe $e$ along direction $d$: it returns $blue$ if $d$ equals the blue color orientation $b_e$ of $e$, and $red$ if $d$ equals the red color orientation $r_e$. 
This constraint ensures that Restriction~2 in Figure~\ref{fig:Pipe_R} is satisfied.

\item \textbf{Hadamard Constraint.}
A Hadamard cube $v$ must be incident to exactly two pipes. 
These two pipes lie in the same plane, and their colors along this plane must differ.
\end{itemize}

\textbf{Optimization Objective.}
Given a quantum circuit $C$, our goal is to construct a pipe diagram $P$ that preserves the semantics of the circuit while minimizing the space–time volume of the implementation.

Formally, we solve the following optimization problem:

\[
\begin{aligned}
\min_{P} \quad & V(P) \\
\text{s.t.} \quad 
& \llbracket \mathcal{T}_P(P) \rrbracket
=
\llbracket \mathcal{T}_C(C) \rrbracket, \\
& P \in \mathcal{P}_{valid}.
\end{aligned}
\]

Here $V(P)$ denotes the space–time volume of pipe diagram $P$, measured as the volume of the bounding box enclosing all cubes in $P$.

The first constraint enforces semantic correctness by requiring that the ZX diagram induced by the pipe diagram implements the same linear map as the ZX representation of the input circuit. 
The second constraint ensures that the pipe diagram satisfies all geometric and color constraints.

\subsection{Design Challenges and Ideas} \label{Design Challenge}

Our compilation problem can be viewed as embedding a ZX graph into a valid 3D pipe diagram while minimizing space–time volume. Although ZX diagrams naturally capture the topological structure of lattice surgery programs, directly embedding them into pipe diagrams remains challenging due to the large search space and the structural constraints imposed by lattice surgery. We highlight three key challenges and the ideas that guide the design of \myCompilerName.

\iffalse
\textbf{Challenge 1: Determining an Execution Structure for ZX Diagrams.}
ZX diagrams represent computations as graphs of spiders without an explicit execution order. 
Directly embedding the entire ZX graph into a pipe diagram would require simultaneously realizing many merge–split junctions and their connections in 3D space, which often leads to infeasible layouts and an unmanageable search space.

\textbf{Idea 1: ZX-Level Program Transformation.}
We first simplify the ZX diagram through spider fusion and then derive an execution structure using layer slicing based on graph connectivity. This transformation exposes parallel merge–split operations while organizing the embedding process into manageable layers.
\fi 

\textbf{Challenge 1: Bridging ZX Representation and Physical Execution.}
ZX diagrams represent computations as graphs of spiders that capture topological structure, but they do not directly provide an execution structure for physical realization. 
As a result, directly embedding a ZX diagram into a pipe diagram would require simultaneously realizing many merge--split junctions and their connections in 3D space, leading to infeasible layouts and an unmanageable search space.

\textbf{Idea 1: Deriving an Execution Structure via ZX-Level Transformation.}
We first simplify the ZX diagram through spider fusion and then derive an execution structure using layer slicing based on graph connectivity. 
This transformation enables ZX diagrams to serve as a practical intermediate representation for layout synthesis by organizing merge--split operations into manageable layers.

\textbf{Challenge 2: Exploring the Large 3D Embedding Space.}
Even with a layered execution structure, embedding ZX operations into a pipe diagram remains computationally challenging. Multiple spatial paths may exist between merge–split junctions, and the number of feasible layouts grows combinatorially with the circuit size.

\textbf{Idea 2: MCTS-Based Embedding Search.}
To explore this large solution space efficiently, we employ Monte Carlo Tree Search (MCTS) to incrementally construct pipe diagrams and guide the search toward layouts with smaller space–time volume.

\textbf{Challenge 3: Scalability for Large Circuits.}
Although layer slicing structures the embedding process, a layer may still contain many spiders whose connections must be realized in subsequent layers. 
As circuit size grows, the number of such inter-layer connections can increase significantly, making the routing problem difficult to scale.

\textbf{Idea 3: Topology-Aware Circuit Partitioning.}
To control this growth, we partition the circuit into smaller blocks according to spider connectivity and embed them incrementally. 
This strategy bounds the number of connections that must be routed between layers and enables scalable compilation.

\section{\myCompilerNameSpace Compilation Algorithm} \label{Sec:Optimizationat}

Given the optimization problem formulated in Section~\ref{Problem Formulation}, our goal is to construct a pipe diagram that preserves the circuit semantics while minimizing the space–time volume.

To achieve this, \myCompilerNameSpace performs compilation in two stages. 
First, the circuit is transformed into a ZX diagram, where spider fusion simplifies the structure and layer slicing determines an execution order. 
Second, the optimized ZX diagram is embedded into a pipe diagram using MCTS.

\subsection{ZX-level Program Transformation} \label{Sec:TopologicalOptimizationat}

\begin{figure*}[t]
    \centering
 %   \vspace{3pt}
    \includegraphics[width=1\textwidth]{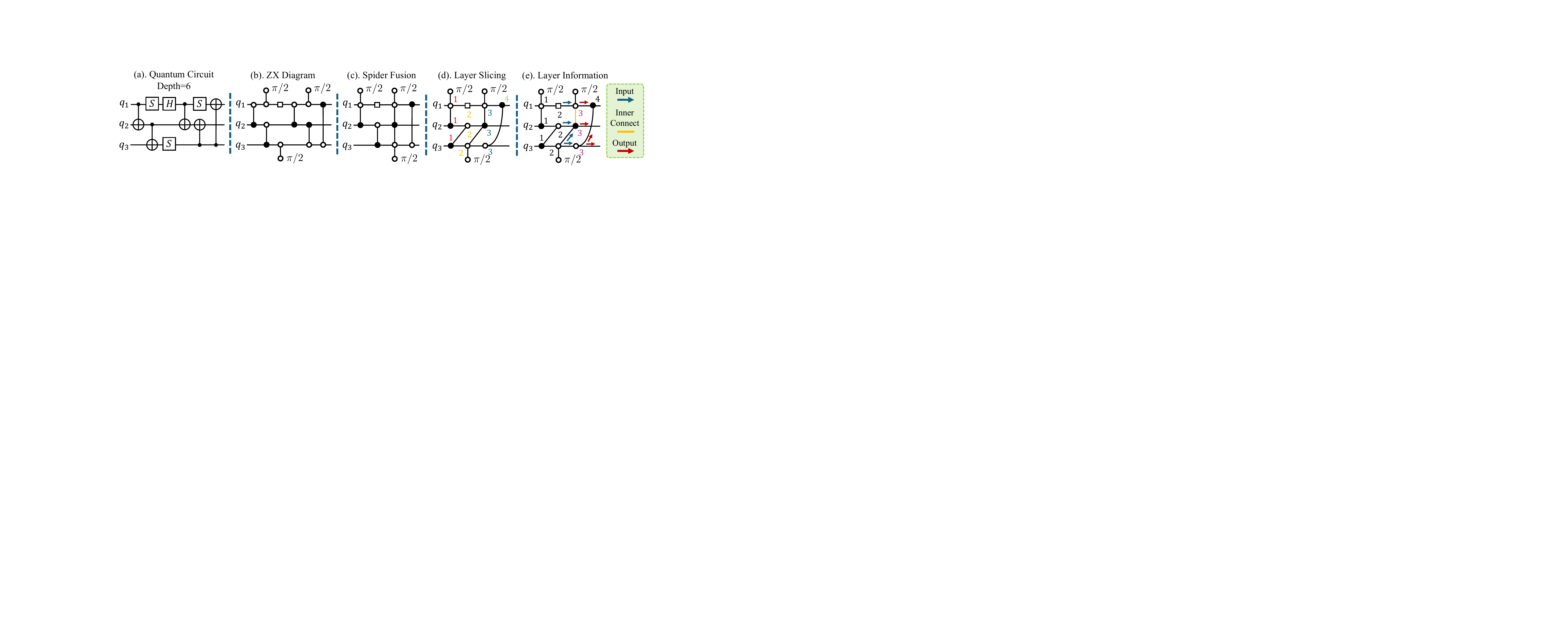}
   % \vspace{-13pt}
    \caption{Topological optimization via ZX representation}
    % \vspace{-12pt}
    \label{fig:ZX_Opt}
\end{figure*}

\subsubsection{\textbf{ZX Simplification via Spider Fusion}} \label{Sec:ZXRepresentationandSpiderFusion}

To enable topological optimization, the first step is to transform the quantum circuit into its ZX diagram Intermediate Representation (IR). We illustrate this process with the example in Figure~\ref{fig:ZX_Opt}(a), which has a circuit depth of 6. Although only $H$, $S$, and CNOT gates are shown, other gates can be handled using the same techniques described in this section. The corresponding ZX diagram obtained from this direct transformation is shown in Figure~\ref{fig:ZX_Opt}(b).

Following this transformation, we can optimize the lattice surgery merge–split operations by fusing spiders of the same type in the ZX diagram, as described in Figure~\ref{fig:ZX_opp}. 

% Since spider fusion is a sound rewrite rule in ZX-calculus, the transformation preserves the semantics of the represented computation. 
% Formally, if $G$ and $G_{fuse}$ denote the ZX diagrams before and after spider fusion, then
% \[
% \llbracket G \rrbracket = \llbracket G_{fuse} \rrbracket .
% \]

In the example above, three $Z$ spider fusions and one $X$ spider fusion are performed, with the resulting diagram shown in Figure~\ref{fig:ZX_Opt}(c). When merging spiders, it is important to ensure that the resulting spider has no more than four wires, since a junction can accommodate at most four ports, as illustrated in Restriction~1 in Figure~\ref{fig:Pipe_R}.

The correctness of this transformation follows from the soundness of the spider fusion rule in ZX-calculus.

\begin{proposition}[Semantic Preservation of Spider Fusion] \label{prop:Spider Fusion}
Let $G \Rightarrow_{fuse} G_{fuse}$ denote the application of the spider fusion rule in ZX-calculus. 
Then
\[
\llbracket G \rrbracket = \llbracket G_{fuse} \rrbracket .
\]
\end{proposition}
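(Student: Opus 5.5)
The plan is to reduce the global statement to a local tensor identity, using the compositional semantics of ZX diagrams. A ZX diagram $G$ denotes a tensor network: each spider is a tensor, and each internal edge is a contraction of the corresponding indices. Hence $\llbracket G\rrbracket$ factors as a local piece containing the rewritten subdiagram, contracted with the unchanged remainder of the diagram. A fusion step $G \Rightarrow_{fuse} G_{fuse}$ changes only the two fused spiders, say $Z$-spiders with phases $\alpha,\beta$ joined by one (or more) edges. The remainder of the network, with its boundary wires and their ordering, is untouched. So it suffices to show that the local subdiagram before and after fusion denotes the same tensor. Equality of the full linear maps then follows, because contracting equal tensors with the same context yields equal results.

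For the local identity I would compute in the computational basis. A $Z$-spider with $m$ inputs, $n$ outputs and phase $\alpha$ is
\[
|0\rangle^{\otimes n}\langle 0|^{\otimes m} + e^{i\alpha}|1\rangle^{\otimes n}\langle 1|^{\otimes m}.
\]
Contracting one leg of the $(\alpha)$ spider with one leg of the $(\beta)$ spider sums over the shared index $k\in\{0,1\}$. The only nonzero terms are those in which all legs of both spiders carry the same value $k$. This gives
\[
|0\cdots0\rangle\langle0\cdots0| + e^{i(\alpha+\beta)}|1\cdots1\rangle\langle1\cdots1|,
\]
which is exactly the $Z$-spider with the combined legs and phase $\alpha+\beta$. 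If the two spiders share $r>1$ edges, each further contraction just enforces equality of already-equal indices. The result is the fused spider with the $r-1$ leftover edges forming self-loops. I would note that such loops are either excluded by the paper's fusion usage (which merges spiders joined by a single wire) or removed by the standard identity-removal and loop rules. The $X$-spider case follows by conjugating every leg with Hadamards, since $H$ maps the $Z$ basis to the $X$ basis: the Hadamards on the internal edge cancel ($H^2=I$), reducing to the $Z$ case.

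The main obstacle is making the locality/compositionality step rigorous rather than the algebra. One must argue that $\llbracket\cdot\rrbracket$ respects sequential and parallel composition and is invariant under deformation ("only connectivity matters," as in Figure~\ref{fig:ZX_topo}). This guarantees that evaluating the network with the fused subdiagram substituted in gives the same map regardless of how the diagram is drawn. I would invoke the standard soundness of ZX-calculus~\cite{coecke2011interacting} for this, treating the diagram as a tensor network whose value is independent of contraction order. The paper's additional restriction that the fused spider has at most four wires is a pipe-realizability constraint, not a semantic one, so it plays no role in this proof.
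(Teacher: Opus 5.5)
Your proposal is correct, but it takes a different route from the paper. The paper's proof is a one-line appeal to the known soundness of the fusion rule (Coecke--Duncan). You instead prove that soundness from the tensor-network semantics: you compute the contraction of two $Z$-spiders in the computational basis, reduce the $X$ case by conjugating every leg with $H$ and cancelling $H^2=I$ on the internal wire, and lift the local identity to the whole diagram by compositionality.

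The paper's citation is shorter and implicitly covers conventions such as scalars, the normalisation of the $X$-spider, and the multi-wire form of the rule. Your version is self-contained and makes explicit two points the paper leaves unstated: with the unnormalised spider tensors no scalar factor appears, and the four-wire cap is a realisability constraint with no semantic role.

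Two small refinements. First, the step you flag as the main obstacle is actually the easy one. The value of a tensor network is a finite sum, over assignments of all internal indices, of a product of tensor entries. You can therefore sum over the indices internal to the fused pair first, so no appeal to the soundness theorem is needed there. Second, for $r>1$ shared wires the contraction already yields the loop-free fused spider with phase $\alpha+\beta$. The extra sums collapse to the single term $k_2=\dots=k_r=k_1$, with no multiplicity factor. The self-loop discussion is only needed if you want to match the diagrammatic statement of the rule.
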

\begin{proof}
The spider fusion rule is a sound rewrite rule of ZX-calculus that preserves the linear map semantics of ZX diagrams~\cite{coecke2011interacting}. Therefore, applying spider fusion does not change the represented computation.
\end{proof}

\subsubsection{\textbf{Layer Slicing for Execution Scheduling}}\label{Sec:OptimizetheInstructionOrderThroughTopologicalConnection}

Now that we have a ZX diagram representing the topological connections of the operations in the lattice surgery program, it needs to be translated into an executable program with a specific schedule. This requires determining the execution order of the operations in the ZX diagram. We refer to this process as layer slicing. An effective slicing strategy can significantly reduce execution overhead.

\textbf{BFS Slicing Algorithm.} For this slicing process, we apply a breadth-first search on the ZX diagram. As shown in Figure~\ref{fig:ZX_Opt}(d), we start from the leftmost nodes of the diagram and label them as layer 1. Firstly, we visit all their immediate neighbors and label them as layer 2. For each of these neighbors, we then visit their unvisited neighbors and label them as layer 3. This process continues until all spiders have been visited and assigned a layer. Formally, each spider $v$ is assigned a layer index $L(v)$ via breadth-first traversal, which defines a valid execution order without modifying the ZX graph.

In this example, the execution depth is reduced to 4 layers from the original circuit depth of 6. Single-wire phase spiders are not labeled, as they act as primitives for the spiders they connect to; their correspondence to pipe operations is illustrated in Figure~\ref{fig:Gate}.

\textbf{Connection Types in Layers.} At this stage, the layer labels provide a chronological execution order for the ZX spiders (which correspond to merge–split operations in lattice surgery). The task of determining their detailed placement and exact time of execution will be addressed by the MCTS embedding in the next section. To facilitate efficient MCTS embedding, we need to identify certain information for the spiders within the same layer. Using the third layer from the previous example as a case study, we illustrate this in Figure~\ref{fig:ZX_Opt}(e). For a spider in layer 3, we define \textbf{input} as the edges coming from the previous layer (layer 2), shown with blue arrows; \textbf{output} as the edges going to the next layer (layer 4), shown with red arrows; and \textbf{inner connection} as the edges linking spiders within the same layer (layer 3), shown with yellow edges.

\begin{example}[Parallelism from Layer Slicing]
Before moving on to the next section, we present a small example to illustrate how breadth-first-search layer slicing exposes parallelism in lattice surgery programs. Figure~\ref{fig:Ladder} depicts a ladder circuit with 5 qubits. Executing the circuit gate by gate would require 5 steps to complete. However, if we treat the merge-split operation as the fundamental instruction and apply our slicing method, the process can be completed in just 2 steps, regardless of the number of qubits.
This effect is directly illustrated in the GHZ state generation shown in Figure~\ref{fig:ghz_16} of the evaluation section.
\end{example}

\subsection{ZX-to-Pipe Construction and Optimization} \label{Sec:3DLayoutOptimization}

In this section, we present how MCTS can be used to convert the ZX diagram, enriched with layer slicing information, into an executable lattice surgery program. We refer to this task as the pipe construction problem.

Recall from Section~\ref{Problem Formulation} that a pipe diagram can be interpreted as a ZX diagram through the mapping $\mathcal{T}_P$. 
Therefore, the goal of the pipe construction stage is to construct a pipe diagram $P$ whose interpretation satisfies
\[
\llbracket \mathcal{T}_P(P) \rrbracket
=
\llbracket G_{\text{fuse}} \rrbracket .
\]
where $G_{\text{fuse}}$ is the optimized ZX diagram obtained in the previous stage.

To achieve this, we first define how each ZX node is instantiated as a corresponding pipe primitive, and then describe how these primitives are assembled into the 3D space using MCTS.

\subsubsection{ZX-to-Pipe Instantiation and Construction} \label{Sec:ZX-to-Pipe Instantiation}

\begin{figure}[t]
    \centering
    % \vspace{-10pt}
    \includegraphics[width=0.6\linewidth]{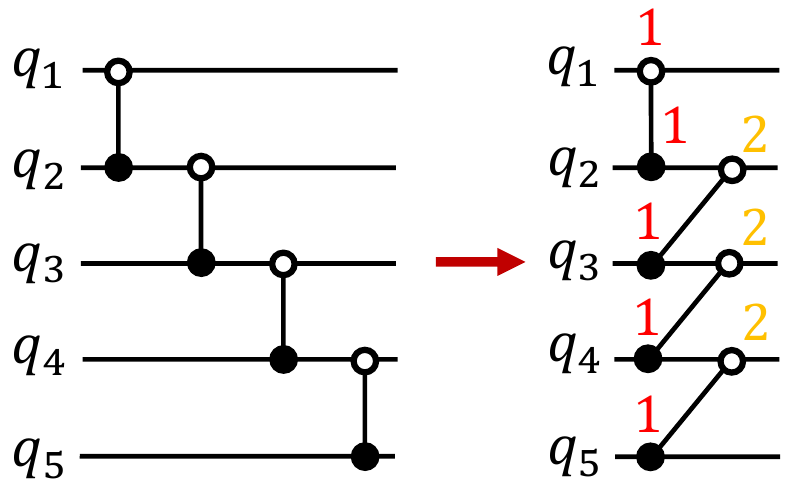}
    % \vspace{-10pt}
    \caption{Ladder circuit example}
    \vspace{-10pt}
    \label{fig:Ladder}
\end{figure}

To convert the optimized ZX diagram into a pipe diagram, each ZX spider must be instantiated as its corresponding pipe primitive. 
Figure~\ref{fig:Gate} illustrates the correspondences between the ZX nodes for the $H$, $S$, $T$, and $R_Z(\theta)$ gates and their corresponding pipe-diagram primitives, a detailed derivation of these correspondences is provided in Appendix~\ref{Sec:Gate}. 
Together with the basic merge–split correspondences introduced earlier in Figure~\ref{fig:ZX_Pipe}(a), these correspondence rules define the ZX-to-pipe construction mapping $\mathcal{I}$ and the corresponding pipe-to-ZX interpretation mapping $\mathcal{T}_P$.

The relationship between ZX diagrams and pipe diagrams can therefore be summarized as:

\[
G
\;\xrightarrow{\;\mathcal{I}\;}
P
\;\xrightarrow{\;\mathcal{T}_P\;}
G,
\]

\begin{figure*}[t]
    \centering
    % \vspace{-10pt}
    \includegraphics[width=0.8\linewidth]{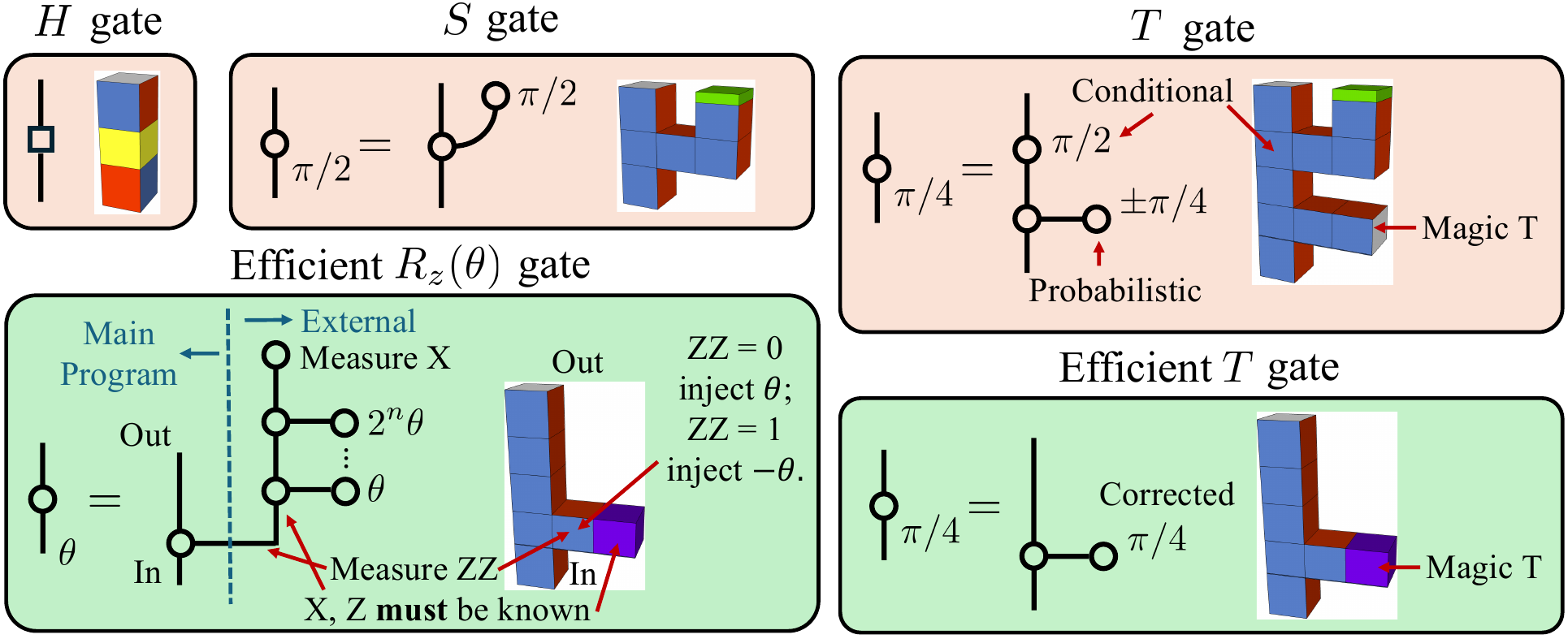}
    % \vspace{-10pt}
    \caption{Instantiation and interpretation between ZX diagrams and pipe primitives}
    % \vspace{-10pt}
    \label{fig:Gate}
\end{figure*}

Formally, let $\mathcal{I}$ denote the ZX-to-pipe construction mapping that constructs a pipe diagram from a ZX diagram. 
For each ZX spider, $\mathcal{I}$ first instantiates it to the corresponding pipe primitive according to the correspondences illustrated in Figure~\ref{fig:ZX_Pipe}(a) and Figure~\ref{fig:Gate}, and then determines a valid geometric realization of these primitives by assigning positions, orientations, and routing paths between them that satisfy the pipe constraints. Determining such a geometric realization constitutes the core \textbf{embedding problem}, which will be addressed in the Section~\ref{Sec:MCTSfor3DEmbedding} using MCTS.
Applying the construction procedure defined by $\mathcal{I}$ to the ZX diagram  $G$ produces a pipe diagram:
\[
P=\mathcal{I}(G).
\]

\subsubsection{Embedding Strategy at High Level} \label{Sec:EmbeddingStrategyatHigh Level}

% We first provide a high-level overview of our method for embedding the ZX diagram into 3D space (two spatial axes and one temporal axis), as shown in Figure~\ref{fig:MCTS_H}. 

After defining the ZX-to-pipe instantiation in the previous subsection, we now describe how the geometric realization of the pipe diagram is constructed through embedding.
This section provides a high-level overview of how the ZX diagram is embedded into 3D space, as shown in Figure~\ref{fig:MCTS_H}.

\begin{figure*}[t]
    \centering
    \includegraphics[width=0.65\linewidth]{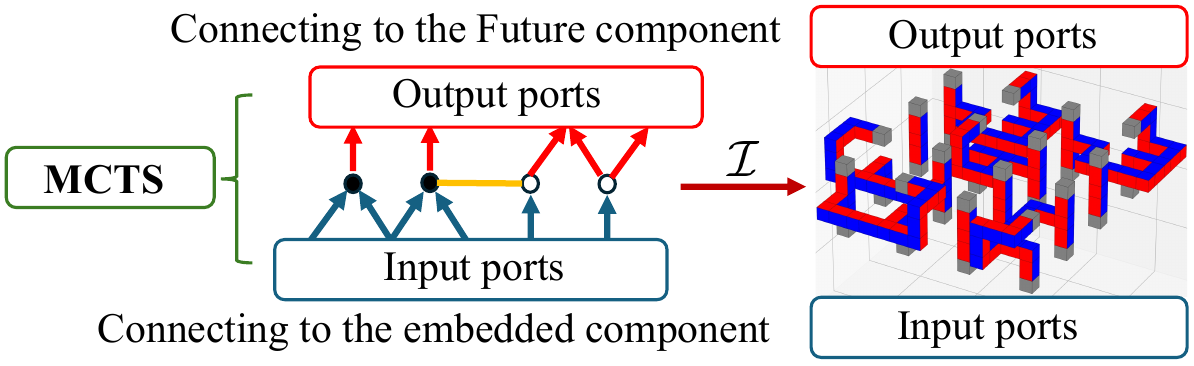}
    % \vspace{-5pt}
    \caption{High-level overview of MCTS embedding}
    % \vspace{-5pt}
    \label{fig:MCTS_H}
\end{figure*}

\textbf{Layer-Wise Embedding Strategy.}
The embedding proceeds layer by layer following the execution order determined by the slicing process described in Section~\ref{Sec:OptimizetheInstructionOrderThroughTopologicalConnection}. Recall that the construction mapping $\mathcal{I}$ consists of two steps: instantiating each ZX spider as corresponding pipe primitive and determining a geometric realization that satisfies the pipe constraints. 
The embedding stage implements the second step.

Before embedding a given layer (illustrated in the middle of the left panel of Figure~\ref{fig:MCTS_H}), all spiders in the preceding layers have already been constructed as pipe diagram fragments. Their topological connections to spiders in the current layer are represented by blue arrows, which we refer to as \emph{input connections} (as defined in Section~\ref{Sec:OptimizetheInstructionOrderThroughTopologicalConnection}).

\textbf{\textbf{Inner-Layer Embedding and Connection Realization.}}
At this stage, MCTS determines the geometric realization of the instantiated pipe primitives in the current layer by assigning positions and orientations, and by routing the corresponding pipe connections. This includes realizing the input connections (blue arrows) and inner connections within the layer (yellow edges). 

Within each layer, spiders are embedded sequentially, rather than simultaneously. An random ordering is assigned to the spiders in the layer, and MCTS incrementally embeds them one by one following this order. This ordering affects the search process but does not change the semantics of the resulting embedding.

\textbf{Handling Future Connections.}
The output connections (red arrows), which link to spiders in future layers, cannot be fully realized at this stage since the corresponding future spiders have not yet been instantiated. Instead, we introduce temporary placeholder ports that are uniformly placed at the top boundary of the current embedding, as illustrated in the right panel of Figure~\ref{fig:MCTS_H}, where the grey cubes at the top represent the placeholder ports. These ports serve as placeholders for future connections and will be routed once the corresponding future spiders are embedded in subsequent layers.

The number of placeholder ports introduced at the top boundary equals the number of ZX spiders in the current layer that have output connections, rather than the total number of outputs; that is, each spider contributes one placeholder port if it has one or more outputs.

The semantic correctness of the construction procedure, including the introduction of placeholder ports, is formalized in the following proposition.

\begin{proposition}[Semantic Preservation of ZX-to-Pipe Construction] \label{prop:Semantic Preservation of ZX-to-Pipe Instantiation}
Let
\[
P = \mathcal{I}(G)
\]
be the pipe diagram constructed from the ZX diagram $G$ using the construction procedure (with placeholder ports for unresolved outputs).
Then
\[
\llbracket \mathcal{T}_P(P) \rrbracket
=
\llbracket G \rrbracket .
\]
\end{proposition}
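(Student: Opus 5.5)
The plan is to show that $\mathcal{T}_P(\mathcal{I}(G))$ and $G$ are related by rewrites that preserve the linear map. The natural tool is the fact that a ZX diagram's semantics depends only on its graph structure (spiders, phases, edges, Hadamard boxes, ordered boundary wires), not on its geometric layout (Figure~\ref{fig:ZX_topo}). Concretely, I aim to construct a graph homomorphism-up-to-identity-spiders from $\mathcal{T}_P(P)$ onto $G$. Each spider of $G$ corresponds to exactly one junction cube of $P$. Each edge of $G$ corresponds to one routed path of pipes, possibly through intermediate cubes. The boundary ports correspond in order. Once this correspondence is set up, semantic equality follows by repeatedly applying identity removal and Proposition~\ref{prop:Spider Fusion}.

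\textbf{Step 1: primitives.} For each primitive, verify that $\mathcal{T}_P$ maps the instantiated pipe primitive back to a ZX fragment equivalent to the original node. This covers the merge--split correspondences in Figure~\ref{fig:ZX_Pipe}(a) and the $H,S,T,R_Z(\theta)$ correspondences in Figure~\ref{fig:Gate}. For phase gadgets such as $S$, $T$ and $R_Z$, the interpretation yields a spider with an attached single-wire phase spider, and spider fusion recovers the phased spider. I take these local equivalences as given by the appendix derivations.

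\textbf{Step 2: routing.} Show that the embedding step changes nothing semantically. A routed connection is a chain of pipes passing through intermediate standard cubes with two incident pipes. Each such cube is interpreted as a phase-$0$ two-legged spider, which is the identity. Corners also become phase-$0$ two-legged spiders of matching colour, so they are identities too. Hadamard cubes are interpreted as $H$ boxes and are used exactly where $G$ has a Hadamard edge, or in cancelling pairs, since $HH=I$. The color-consistency and direction constraints of $\mathcal{P}_{valid}$ guarantee that each junction interprets to a well-defined $Z$ or $X$ spider with the intended colour, so no spurious Hadamards arise. Collapsing every path to a single wire therefore turns $\mathcal{T}_P(P)$ into a diagram with the same connectivity as $G$. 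By invariance under deformation, this diagram has the same semantics as $G$.

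\textbf{Step 3: placeholder ports.} This is the main obstacle, because a spider with several outputs contributes only one placeholder port. I would handle it in two parts, arguing by induction over the layers.

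The induction invariant is this: after embedding layers $1,\dots,k$, the partial pipe diagram interprets to the subdiagram of $G$ induced by those layers. In that subdiagram, each spider with output edges carries one dangling wire, which is the placeholder.

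The base case is trivial. For the inductive step, the layer-$(k+1)$ spiders are routed to these ports. When a placeholder must feed several future edges, the realized construction branches it through a same-colour junction. Spider fusion (Proposition~\ref{prop:Spider Fusion}) with the originating spider then restores the original multi-edge spider, provided the degree stays at most four; otherwise it yields a fused chain that is semantically equal.

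At the final layer, every placeholder has been resolved, and the output ports are matched to the boundary wires in order. This gives $\llbracket \mathcal{T}_P(P)\rrbracket=\llbracket G\rrbracket$.
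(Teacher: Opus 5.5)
Your proposal is correct and follows essentially the paper's argument. Each spider is instantiated as a primitive that $\mathcal{T}_P$ maps back to the same node, routing adds nothing semantically, and a placeholder port amounts to a spider defusion that Proposition~\ref{prop:Spider Fusion} (read in reverse) undoes. Your layer-wise induction and your explicit removal of two-legged phase-$0$ routing cubes are just a more careful rendering of the paper's claim that routing introduces no extra spiders, so that $\mathcal{T}_P(P)$ equals $G$ up to defusion.
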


\begin{proof}
Each ZX spider in $G$ is instantiated as a pipe primitive that corresponds to the same ZX node under the interpretation mapping $\mathcal{T}_P$. 
The routing paths introduced during embedding connect these primitives without introducing additional ZX spiders. 
Therefore, interpreting the resulting pipe diagram recovers the original ZX diagram, i.e., $\mathcal{T}_P(P)=G$.

When placeholder ports are introduced for unresolved outputs, this corresponds to applying ZX spider defusion (the inverse of spider fusion), where the original spider is split into two connected spiders: one remains at the current location and the other serves as a routing junction for future connections. 
Therefore, interpreting the resulting pipe diagram yields a ZX diagram
\[
\mathcal{T}_P(P) = G_{\mathrm{defuse}},
\]
which is obtained from $G$ by applying spider defusion.

Since spider fusion and defusion are sound rewrite rules in ZX-calculus, they preserve the semantics of ZX diagrams. Hence
\[
\llbracket G_{\mathrm{defuse}} \rrbracket
=
\llbracket G \rrbracket .
\]
Consequently,
\[
\llbracket \mathcal{T}_P(P) \rrbracket
=
\llbracket G \rrbracket .
\]
\end{proof}

% \begin{proposition}[Semantic Preservation under Placeholder Port Insertion]
% Let 
% \[
% P=\mathcal{I}'(G)
% \]
% be the pipe diagram constructed from the ZX diagram $G$ using the modified instantiation procedure with placeholder ports. Then
% \[
% \llbracket \mathcal{T}_P(P) \rrbracket
% =
% \llbracket G \rrbracket .
% \]
% \end{proposition}

% \begin{proof}
%     Introducing a placeholder port corresponds to applying a ZX spider defusion, where the original spider is split into two connected spiders: one remains at the current location while the other serves as a routing junction for future connections.
% \end{proof}

% Conceptually, this corresponds to applying a ZX spider defusion, where the original spider is split into two connected spiders: one remains at the current location, while the other acts as a routing junction for future connections. Since spider fusion/defusion preserves ZX semantics, this transformation does not change the represented computation. Therefore, although the ZX graph obtained from the pipe diagram may differ structurally from $G_{\mathrm{fuse}}$, their semantics remain identical:
% \[
% \llbracket \mathcal{T}_P(P) \rrbracket
% =
% \llbracket G_{\mathrm{fuse}} \rrbracket .
% \]

To maintain scalability, we further restrict the number of spiders with outgoing connections in each layer to be at most the number of logical qubits, which bounds the number of placeholder ports and prevents the embedding space from growing unboundedly. This assumption will be justified in the Section~\ref{Sec:Scalability}, where we show that circuit partitioning ensures this condition holds.

\subsubsection{MCTS for Optimized 3D Embedding} \label{Sec:MCTSfor3DEmbedding}

Layout optimization is achieved by using the space–time volume of the resulting pipe diagram as the objective function of MCTS, guiding the search toward embeddings that minimize this volume. We now describe our implementation of MCTS for solving the ZX diagram embedding problem.

\textbf{Embedding State.}
We first define the embedding state in MCTS. An embedding state corresponds to a partial construction of the pipe diagram. 
It records the current geometric realization of the ZX spiders that have already been embedded and the routing paths constructed so far.

Formally, an embedding state maintains assignments to the variables defining the pipe diagram representation introduced in Section~\ref{Problem Formulation}. 
In particular, it contains the following components:

(1) \textbf{Cube Variables.} For each cube $v$, the state records the variables $p_v$, $o_v$, $c_v$ representing its spatial position, orientation, and color assignment.

(2) \textbf{Pipe Variables.} For each constructed connection, the state records the pipe variables $s_e$, $t_e$, $d_e$, $b_e$, $r_e$ describing the endpoints, direction, and color orientations of the pipe segment.

(3) \textbf{ZX-to-Cube Correspondence.}
The state maintains a mapping:
\[
\phi : P \rightarrow G
\]
that associates each cube in the current pipe diagram with the ZX spider it realizes.
Since not all ZX spiders are embedded at a given state, this mapping is generally not surjective.
Therefore, the pipe diagram $P$ stored in the state represents a partial embedding of the ZX diagram $G$.
This mapping enables the algorithm to determine the placement and topological connections of the corresponding pipe primitive for the next ZX spider during the embedding process.

Now we introduce the four phases of MCTS below.

% \textbf{Embedding State} serves as the tree node in MCTS and is defined by the following primitives: (1) \textbf{Position}: a dictionary storing the 3D positions of all embedded spiders; (2) \textbf{Spider Type}: a dictionary indicating whether a spider is an X-spider, a Z-spider (optionally with a phase), or a Hadamard box; (3) \textbf{Orientation}: a dictionary specifying the face direction of the pipe junction for the embedded spider, which is perpendicular to all its ports at the junction. As illustrated by the green arrow in Figure~\ref{fig:Embedding} (a), no path may connect to the spider from this face direction. This serves as a strict constraint on the pipe structure, as described in Section~\ref{Sec:InterpretingLatticeSurgeryUsingZX-calculus}. (4) \textbf{Path}: a list of paths connecting the spiders.

\textbf{Selection.}
Selection follows the UCT policy. Starting from the root state, the algorithm repeatedly selects the child node with the highest UCT score until a node eligible for expansion is reached. The UCT score of a child node $i$ is defined as

\[
\text{UCT}(i) =
\frac{R_i}{N_i}
+
c \sqrt{\frac{\ln N_p}{N_i}},
\]

where $R_i$ is the accumulated reward of node $i$, $N_i$ is the visit count of node $i$, $N_p$ is the visit count of its parent node, and $c$ is the exploration constant. The reward is defined as the negative space–time volume of the resulting pipe diagram, i.e., $R = -V(P)$, so that embeddings with smaller volume receive higher rewards.

\textbf{Expansion.}
Expansion performs a one-step transition between embedding states by embedding the pipe primitive instantiated by the next ZX spider. 
Firstly, a non-embedded spider from the current layer is selected and instantiated as its corresponding pipe primitive according to the ZX-to-pipe instantiation rules described in Section~\ref{Sec:ZX-to-Pipe Instantiation}.

The pipe primitive is then placed at a vacant position in the 3D space. 
Candidate positions are generated from the neighboring locations of its already embedded neighbors in the ZX diagram. In the simplest strategy, the spider is placed directly above one of its neighbors, which provides a deterministic placement and accelerates the search. 
To better exploit the 3D embedding space, we optionally allow placements in any neighboring direction, generating multiple candidate positions. 
We refer to this configurable candidate generation strategy as \textbf{placement optimization}.

After determining the position of the new pipe primitive, routing paths are constructed to connect it with its embedded neighbors. 
These connections are realized by finding the shortest valid paths in the 3D pipe diagram. 
Figure~\ref{fig:Next} illustrates an example where a third spider is instantiated, placed, and connected to two previously embedded spiders.

\begin{figure}[t]
    \centering
    \includegraphics[width=1.0\linewidth]{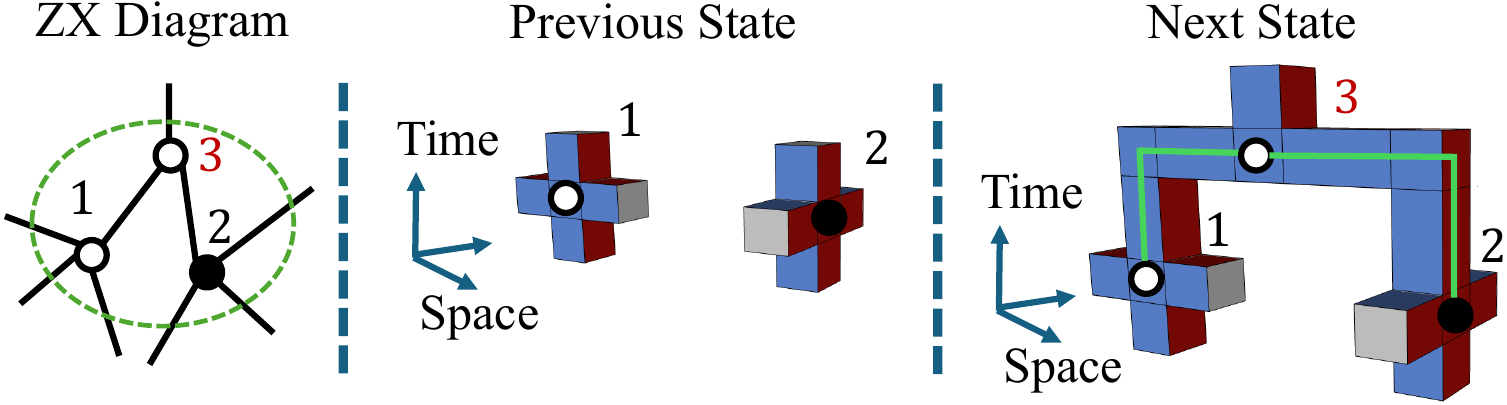}
    % \vspace{-5pt}
    \caption{Expansion example}
    % \vspace{-5pt}
    \label{fig:Next}
\end{figure}

We now discuss how the \textbf{pipe constraints} defined in Section~\ref{Problem Formulation} are enforced during routing. 
The routing process respects the pipe direction constraint by avoiding paths that traverse a cube along its orientation direction.
The shortest feasible path may violate the color constraint. 
When such a conflict occurs, a color-switch operation is applied, which introduces two additional merge–split operations to adjust the pipe color orientation, as illustrated in Figure~\ref{fig:Color}.

If no feasible placement or routing satisfying the pipe constraints can be found for the selected spider, the expansion fails and the state is treated as an infeasible embedding. In this case, the subsequent simulation assigns a reward of $-\infty$, effectively pruning this branch of the search.

\textbf{Simulation.}
Starting from the expanded state, the simulation phase completes the embedding of the remaining spiders in the current layer using a fast heuristic rollout policy rather than further MCTS exploration. 
In this rollout, each remaining spider is embedded using a fixed placement rule that deterministically assigns a position and constructs the corresponding routing paths while maintaining the pipe constraints. 
The space–time volume of the resulting pipe diagram is then evaluated, and the reward is defined as the negative volume.

\begin{figure}[t]
    \centering
    \includegraphics[width=0.55\linewidth]{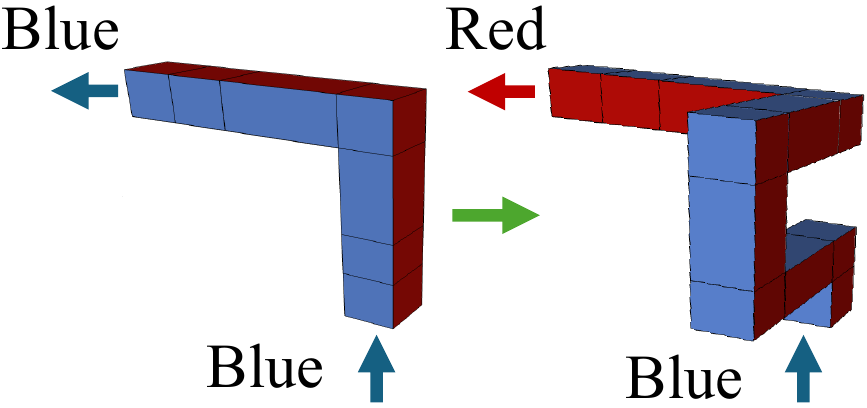}
    % \vspace{-5pt}
    \caption{Color switch}
    % \vspace{-5pt}
    \label{fig:Color}
\end{figure}

\textbf{Backpropagation.}
The obtained reward is propagated along the path from the simulated node to the root of the MCTS tree. 
For each visited node, the cumulative reward and visit count are updated accordingly.

\textbf{Applying MCTS to Layer-Wise Embedding.}
The MCTS procedure described above is applied to each layer of the ZX diagram in the order determined by the slicing process, as described in Section~\ref{Sec:EmbeddingStrategyatHigh Level}. 
For a given layer, MCTS searches for an embedding of all spiders in that layer while respecting the previously embedded layers. 
Once a layer is successfully embedded, its placement becomes fixed and serves as the foundation for embedding subsequent layers.
In the first layer, all qubits are initially placed at grid positions. 
If the embedding of a layer fails, the algorithm falls back to the baseline compilation method for that layer, which guarantees a valid pipe diagram and correct semantics.

%\jz{
\subsection{Correctness of \myCompilerNameSpace}

We now state the correctness of the \myCompilerNameSpace compilation algorithm defined in this section.

\begin{theorem}
Let $C$ be an input quantum circuit and let $P$ be the pipe diagram compiled by \myCompilerName\ from $C$. Define
\[
G_C = \mathcal{T}_C(C), \qquad
G_P = \mathcal{T}_P(P),
\]
where $\mathcal{T}_C$ maps a circuit to a ZX diagram, and $\mathcal{T}_P$ maps a pipe diagram to a ZX diagram.

Then
\[
\llbracket G_C \rrbracket = \llbracket G_P \rrbracket.
\]
\end{theorem}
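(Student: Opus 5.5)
The proof will be a chain of semantic equalities that follows the compilation pipeline stage by stage. Each stage either applies a sound ZX rewrite or is a construction already covered by an earlier proposition, so the theorem follows by transitivity of equality of linear maps. Concretely, I will write the pipeline as
\[
C \xrightarrow{\;\mathcal{T}_C\;} G_C \xrightarrow{\;\text{fuse}^*\;} G_{\text{fuse}} \xrightarrow{\;\text{slice}\;} (G_{\text{fuse}}, L) \xrightarrow{\;\mathcal{I}\;} P,
\]
and show that no arrow changes $\llbracket\cdot\rrbracket$.

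\textbf{Fusion.} The ZX-level program transformation is a finite sequence $G_C = G_0 \Rightarrow_{fuse} G_1 \Rightarrow_{fuse} \cdots \Rightarrow_{fuse} G_k = G_{\text{fuse}}$. By Proposition~\ref{prop:Spider Fusion} and induction on $k$, $\llbracket G_C\rrbracket = \llbracket G_{\text{fuse}}\rrbracket$. The restriction to at most four wires only limits which fusions are applied, so it does not affect soundness.

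\textbf{Slicing.} Layer slicing only assigns the labels $L(v)$ and leaves the graph unchanged. Since ZX semantics depend only on connectivity and boundary ordering, the diagram being embedded is still $G_{\text{fuse}}$.

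\textbf{Embedding.} This is the main step. Proposition~\ref{prop:Semantic Preservation of ZX-to-Pipe Construction} gives $\llbracket \mathcal{T}_P(\mathcal{I}(G))\rrbracket = \llbracket G\rrbracket$ for the construction, placeholder ports included. The actual compiler, however, also uses three ingredients the proposition does not explicitly cover: color-switch gadgets, layer-by-layer composition of partial embeddings, and the fallback to the baseline compiler. I plan to handle them as follows.

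\emph{Color switches.} A color switch introduces two extra merge--split junctions along a pipe. I will argue that under $\mathcal{T}_P$ these become phase-free two-legged spiders, or a pair of Hadamards that cancel, on a wire. Such spiders equal the identity by the identity rule, a special case of (de)fusion, so they do not change the semantics.

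\emph{Layer composition.} I will do an induction on the layers. The hypothesis is that the pipe fragment built through layer $\ell$, read through $\mathcal{T}_P$, equals the subdiagram of $G_{\text{fuse}}$ induced by layers $\le \ell$, up to defusion at the placeholder ports. When layer $\ell+1$ is embedded, routing to a placeholder port re-fuses the defused spider with its pending edges. Plugging a diagram into the open wires of another is compositional in ZX semantics, so the induction step goes through.

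\emph{Fallback layers.} When the baseline compiler handles a layer, I will invoke its assumed correctness: it produces a pipe fragment whose interpretation equals that layer's subdiagram. The same compositional argument then applies.

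Chaining the three stages gives $\llbracket G_C\rrbracket = \llbracket G_{\text{fuse}}\rrbracket = \llbracket \mathcal{T}_P(P)\rrbracket = \llbracket G_P\rrbracket$.

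I expect the main obstacle to be the layer-composition induction. The difficulty is making precise that a placeholder port created by defusion is later connected to exactly the right future spiders, with boundary wire ordering preserved. This is an aggregated port: one port per spider, regardless of how many outputs it has. I will first handle it by defusing the spider once more at the placeholder, so that it fans out into its several output edges. This is again sound by spider fusion, so every step remains an instance of a rewrite whose soundness is already established.
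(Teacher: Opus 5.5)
Your proposal is correct and follows the paper's proof: iterated spider fusion gives $\llbracket G_C\rrbracket=\llbracket G_{\text{fuse}}\rrbracket$, the ZX-to-pipe construction proposition gives $\llbracket \mathcal{T}_P(P)\rrbracket=\llbracket G_{\text{fuse}}\rrbracket$, and transitivity finishes. Your separate treatment of color switches as identity spiders, of layer-by-layer composition through placeholder defusion, and of the baseline fallback is care the paper does not take, since it folds all of this into that proposition, whose proof asserts that routing adds no spiders; note that your slicing step overlooks the idle spiders inserted by topology-aware partitioning, where the graph does change, but your identity-spider argument disposes of these as well.
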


\begin{proof}
We prove the theorem by considering the two stages of \myCompilerName.

\textbf{Step 1: ZX-Level Program Transformation.}
During compilation, \myCompilerNameSpace applies spider fusion to obtain an optimized ZX diagram $G_{\text{fuse}}$.
By Proposition~\ref{prop:Spider Fusion},
these transformations preserve semantics, and therefore
\begin{equation}
\llbracket G_C \rrbracket 
=
\llbracket G_{\text{fuse}} \rrbracket.
\label{eq1}
\end{equation}

\textbf{Step 2: ZX-to-Pipe Construction and Optimization.}
Next, the compiler constructs a pipe diagram from $G_{\text{fuse}}$:
\[
P = \mathcal{I}(G_{\text{fuse}})
\]
by instantiating the spiders of $G_{\text{fuse}}$ as pipe primitives and embedding them into 3D space.  
By Proposition~\ref{prop:Semantic Preservation of ZX-to-Pipe Instantiation},
this construction preserves the ZX semantics:
\begin{equation}
\llbracket \mathcal{T}_P(P) \rrbracket
=
\llbracket G_{\text{fuse}} \rrbracket .
\label{eq2}
\end{equation}

Let $G_P = \mathcal{T}_P(P)$. Combining Equation~(\ref{eq1}) and Equation~(\ref{eq2}) above yields:
\[
\llbracket G_P \rrbracket
=
\llbracket G_{\text{fuse}} \rrbracket
=
\llbracket G_C \rrbracket .
\]

Therefore, the compiled pipe diagram preserves the semantics of the input circuit.
\end{proof}

\section{Enable Scalability by Topology-Aware Circuit Partition} \label{Sec:Scalability}

\begin{figure}[t]
    \centering
    \includegraphics[width=1.0\linewidth]{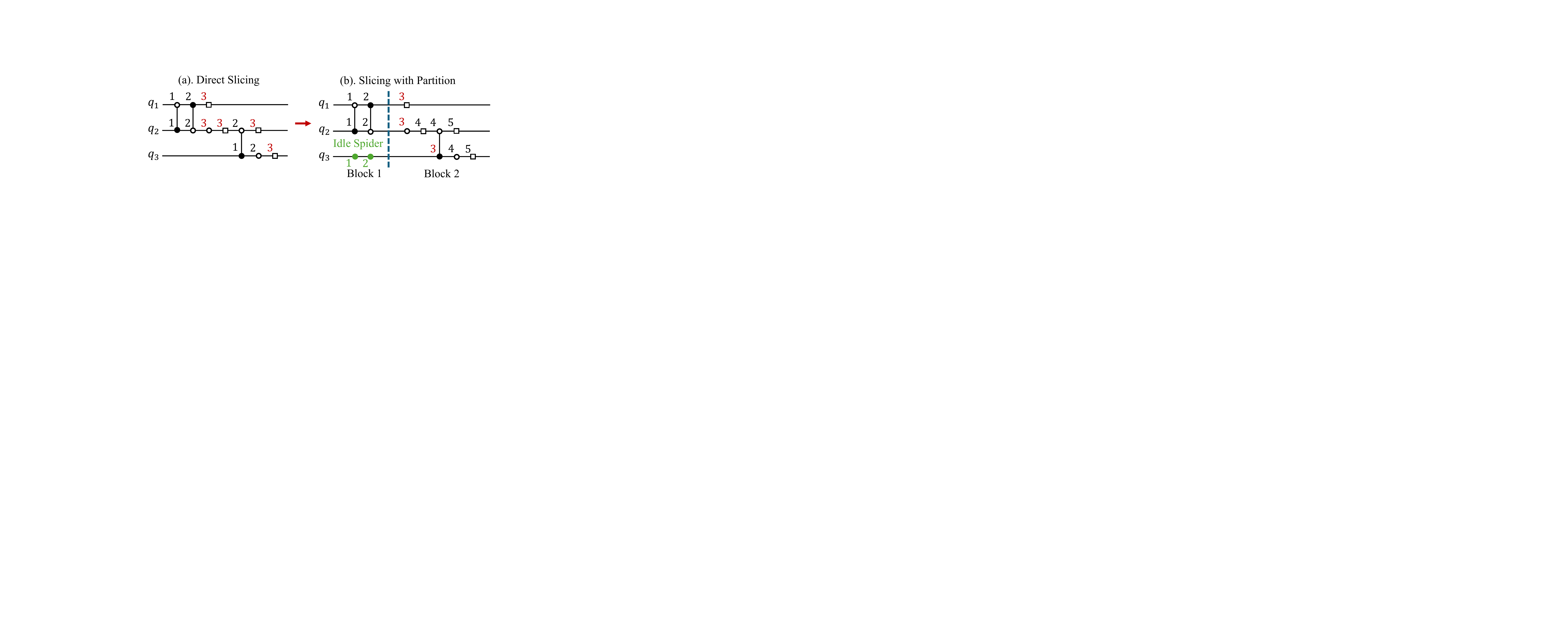}
    % \vspace{-5pt}
    \caption{Circuit partition example}
    \vspace{-5pt}
    \label{fig:Partition}
\end{figure}

\textbf{Routing Frontier Explosion.}
The layer slicing technique introduced in Section~\ref{Sec:OptimizetheInstructionOrderThroughTopologicalConnection} can significantly parallelize spider operations. However, in some cases it may produce a large number of spiders with outgoing connections within the same layer. These spiders determine the \textbf{routing frontier}—the set of spiders in the current layer that have outgoing connections to the next layer—that must be connected during embedding.
The number of such spiders directly determines the number of placeholder ports introduced for future connections as shown in Figure~\ref{fig:MCTS_H}, which in turn determines the spatial footprint required for embedding the layer. 
If this number varies significantly across layers, some layers may require a much larger embedding space than others, leading to inefficient use of space–time resources.

\begin{example}[Routing Explosion Across Layers]

An example is shown in Figure~\ref{fig:Partition} (a), where a 3-qubit diagram contains 4 spiders with outgoing connections in the second layer and requires embedding 5 spiders in the third layer, which is almost twice the number of qubits. As the number of qubits increases, the number of spiders with outgoing connections may also grow proportionally. This increases the number of placeholder ports introduced for future connections, enlarging the routing frontier and the spatial footprint required for embedding.
\end{example}

\textbf{Circuit Partitioning.}
A natural approach to control the routing frontier is to partition the circuit into smaller blocks. As shown in Figure~\ref{fig:Partition} (b), introducing a single partition in the middle reduces the number of spiders with outgoing connections in the second layer from 4 to 3, which matches the number of logical qubits. At the same time, the number of spiders in the third layer is reduced from 5 to 3, making the structure easier to embed. However, this also increases the total number of layers from 3 to 5. In the extreme case where the circuit is partitioned at every depth, the maximum number of spiders with outgoing connections per layer is equal to the number of qubits, while the total number of layers equals the circuit depth.

\textbf{Maintaining Layer Consistency.}
After partitioning, the circuit is divided into separate blocks, which requires modifications to the slicing algorithm described in Section~\ref{Sec:OptimizetheInstructionOrderThroughTopologicalConnection}. For each new block, we assign spider labels starting from the maximum layer index of its preceding blocks. For example, in Figure~\ref{fig:Partition} (b), block 2 begins at layer 3. However, partitioning may sometimes disrupt layer consistency, causing neighboring spiders to have non-consecutive layer numbers. To resolve this, we insert idle spiders to enforce consistency, as illustrated by the green node in Figure~\ref{fig:Partition} (b). These idle spiders correspond to identity operations in the ZX representation and therefore do not change the semantics of the ZX diagram.

\textbf{Limitations of Uniform Partitioning.}
A simple strategy is uniform partitioning, which splits the circuit at fixed depths (e.g., every five circuit depth) to control the routing frontier. However, this increases the total number of layers to approximately match the original circuit depth, thereby diminishing the benefits of ZX-based optimization.

\textbf{Topology-Aware Partitioning.}
Our approach instead uses a topology-aware partitioning method that explicitly incorporates spider connectivity into the partitioning process. The circuit depth of each partition block is determined dynamically. For each new block, the partitioning begins with a circuit depth of 2. We then apply slicing and examine the spiders with outgoing connections in each layer of the block.

Let $S_i$ denote the set of spiders in layer $i$ that have outgoing connections to the next layer. These spiders correspond to routing frontiers that must be connected during embedding. If $|S_i|$ remains below a predefined threshold for all layers in the block, the block is extended by one additional circuit depth. Otherwise, the partition is finalized at the preceding circuit depth.

In our implementation, the threshold is set to the number of logical qubits. Consequently, the number of placeholder ports introduced during embedding is also bounded by the number of logical qubits, preventing the embedding space from growing unboundedly and enabling scalable MCTS-based embedding.
\section{Evaluation} \label{Sec:Evaluation}

In this section, we evaluate \myCompilerNameSpace by comparing it against state-of-the-art baselines, analyzing the impact of each optimization step, examining the effects of different architectures, and assessing scalability performance.

\subsection{Experiments Setup.} \label{Sec:ExperimentsSetup}
\textbf{Baseline.} Our baseline compilers include two heuristic compilers based on the quantum circuit model for lattice surgery compilation: Liblsqecc~\cite{watkins2024high} and DASCOT~\cite{molavi2025dependency}, as well as a SAT-solver-based compiler LaSsynth~\cite{tan2024sat} for Clifford circuits.
Note that Liblsqecc can manage the physical qubit resources in two configurations: `sparse layout', where data qubits are interleaved with ancilla qubits, and `tight layout', where all data qubits are arranged in two rows with a single row of ancilla qubits placed between them. \myCompilerName, DASCOT, and LaSsynth all adopt the `sparse layout'.

%\jz{For the logical qubit layout, we consider two types. In the `sparse layout', data qubits are interleaved with ancilla qubits, while in the `tight layout', all data qubits are arranged in two rows with a single row of ancilla qubits placed between them. The tight layout is supported only by the baseline compiler Liblsqecc, as it allows logical qubit rotations.}

\textbf{Experiment Configurations.} To illustrate the effect of each optimization step, we design three experiment configurations. 1. \textsf{`Place-Opt’} enables placement optimization during the expansion phase of MCTS (see Section~\ref{Sec:MCTSfor3DEmbedding}), allowing exploration of spider placement in 3D space. 2. \textsf{`Part-Opt’} applies the topology-aware circuit partitioning method introduced in Section~\ref{Sec:Scalability}, in contrast to a uniform partitioning applied every 5 circuit depths. 3. \textsf{`Full-Opt'} is to apply all \myCompilerNameSpace optimizations. Note that all the three configurations above incorporate the ZX-level program transformation introduced in Section~\ref{Sec:TopologicalOptimizationat}, as they are constructed upon this transformation.

\textbf{Hardware Configuration.} 
We adopt a two-dimensional lattice coupling of physical qubits as our device architecture. The logical qubits are square patches on the physical qubit lattice. This is  a common setting widely used in previous works~\cite{watkins2024high,molavi2025dependency,tan2024sat}.

\iffalse
Two types of layouts are considered: in the \textbf{sparse layout}, data qubits are interleaved with ancilla qubits, while in the \textbf{tight layout}, all data qubits are arranged in two rows with a single row of ancilla qubits in between. The tight layout is supported only by the baseline compiler Liblsqecc, as it enables logical qubit rotation.
\fi

\begin{table}[t]
% \begin{table}[t]
  \centering
%   \vspace{-5pt}
  \caption{Benchmark Information}
      % \resizebox{\columnwidth}{!}{
      {\fontsize{10}{12}\selectfont
    \begin{tabular}{|m{4cm}<{\centering}|c|c|c|}
    \hline
    Benchmark & Qubit\# & Depth & Gate\#  \\
    \hline
    \multirow{8}{*}{Bernstein-Vazirani (BV)} & 16 & 12 & 31 \\ \cline{2-4}
                                    & 20 & 13 & 50 \\ \cline{2-4}
                                    & 30 & 21 & 78 \\ \cline{2-4}
                                    & 40 & 25 & 102 \\ \cline{2-4}
                                    & 50 & 28 & 125 \\ \cline{2-4}
                                    & 60 & 35 & 152 \\ \cline{2-4}
                                    & 80 & 42 & 199 \\ \cline{2-4}
                                    & 100 & 52 & 249 \\ \cline{2-4}
    \hline
    \multirow{6}{*}{Deutsch-Jozsa (DJ)} & 16 & 17 & 50 \\ \cline{2-4}
    & 20 & 24 & 87 \\ \cline{2-4}
    & 40 & 43 & 162 \\ \cline{2-4}
    & 60 & 63 & 251 \\ \cline{2-4}
    & 80 & 83 & 341 \\ \cline{2-4}
    & 100 & 103 & 424 \\ 
    \hline
    Grover Search & 6 & 457 & 653\\
    \hline
    Quantum Fourier Transform (QFT) & 16 & 236 & 1102 \\
    \hline
    Quantum Phase Estimation (QPE) & 16 & 325 & 1211 \\
    \hline
    Variational Quantum Eigensolver (VQE) & 16 & 34 & 256 \\
    \hline
    GHZ State & 16 & 16 & 16 \\
    \hline
    W State & 16 & 95 & 232\\
    \hline
    Quantum Approximate Optimization Algorithm (QAOA) & 16 & 34 & 136 \\
    \hline
    \multirow{5}{*}{Random Clifford Circuit} & 4 & 8 & 12 \\ \cline{2-4}
    & 6 & 7 & 18 \\ \cline{2-4}
    & 8 & 9 & 24 \\ \cline{2-4}
    & 10 & 9 & 30 \\ \cline{2-4}
    & 12 & 9 & 36 \\ 
    \hline
    
    \end{tabular}%
    }
  \label{tab:benchmark}%
% \end{table}%
\end{table}

\textbf{Benchmarks.} We evaluate \myCompilerNameSpace using a variety of quantum algorithms of different types and sizes, with detailed information provided in Table~\ref{tab:benchmark}.

\textbf{Metrics.} We mainly use the following two metrics:
 (\textbf{a}) \textbf{Space--time volume:} The product of space and time occupied by the lattice surgery program, representing the physical resources consumed. (\textbf{b}) \textbf{Compilation time:}  The CPU time required to translate a quantum circuit into executable lattice surgery instructions.

\textbf{Implementation.} We implemented \myCompilerNameSpace in Python, utilizing the PyZX package~\cite{kissinger2019pyzx} to help with ZX-diagram-based optimizations. All experiments were conducted on a server with a 2.4 GHz CPU and 0.75 TB of memory. For the MCTS implementation, we selected two random seeds for each layer embedding to change the embedding order of spiders, with the number of iterations fixed at 1000 and a timeout of 2 seconds. Among the results from different spider embedding orders, the one yielding the smaller space--time volume was chosen. 
Regarding the magic state, we take the $Rz$ state as the magic state and adopt the assumption from the baseline~\cite{molavi2025dependency}: the magic state factory is not included in the space--time volume, and the distillation time is omitted.
%. We also omit the distillation time
%We use the $Rz$ state as the magic state and assume that the magic state factory is not included in the space--time volume. We also omit the distillation time, under the assumption that it can be obtained as needed\todo{does baseline also use this assumption? If so, mention it here }. \jz{only used by~\cite{molavi2025dependency}} 

\subsection{Overall Result} \label{Sec:OverallResult}

Table~\ref{tab:Overall} summarizes the overall compilation results of \myCompilerNameSpace for nine benchmark algorithms with 16 qubits (except for Grover search, which uses 6 qubits). The circuit depths range from $12$ to $457$, and the gate counts range from $16$ to $1211$. For the sparse layout configuration, we adopt a square lattice hardware topology in which each row and column contains 4 data qubits interleaved with ancilla qubits, forming a $4\times4$ grid. For Grover’s search, the layout is a $2\times3$ grid.

% \begin{figure}[t]
%     \centering
%     \includegraphics[width=0.5\linewidth]{fig/ghz_16.png}
%     % \vspace{-5pt}
%     \caption{Compiled pipe diagram for 16-qubit GHZ state}
%     % \vspace{-5pt}
%     \label{fig:ghz_16}
% \end{figure}

\begin{figure}[t]
    \centering
    % \vspace{-10pt}
    \includegraphics[width=0.4\linewidth]{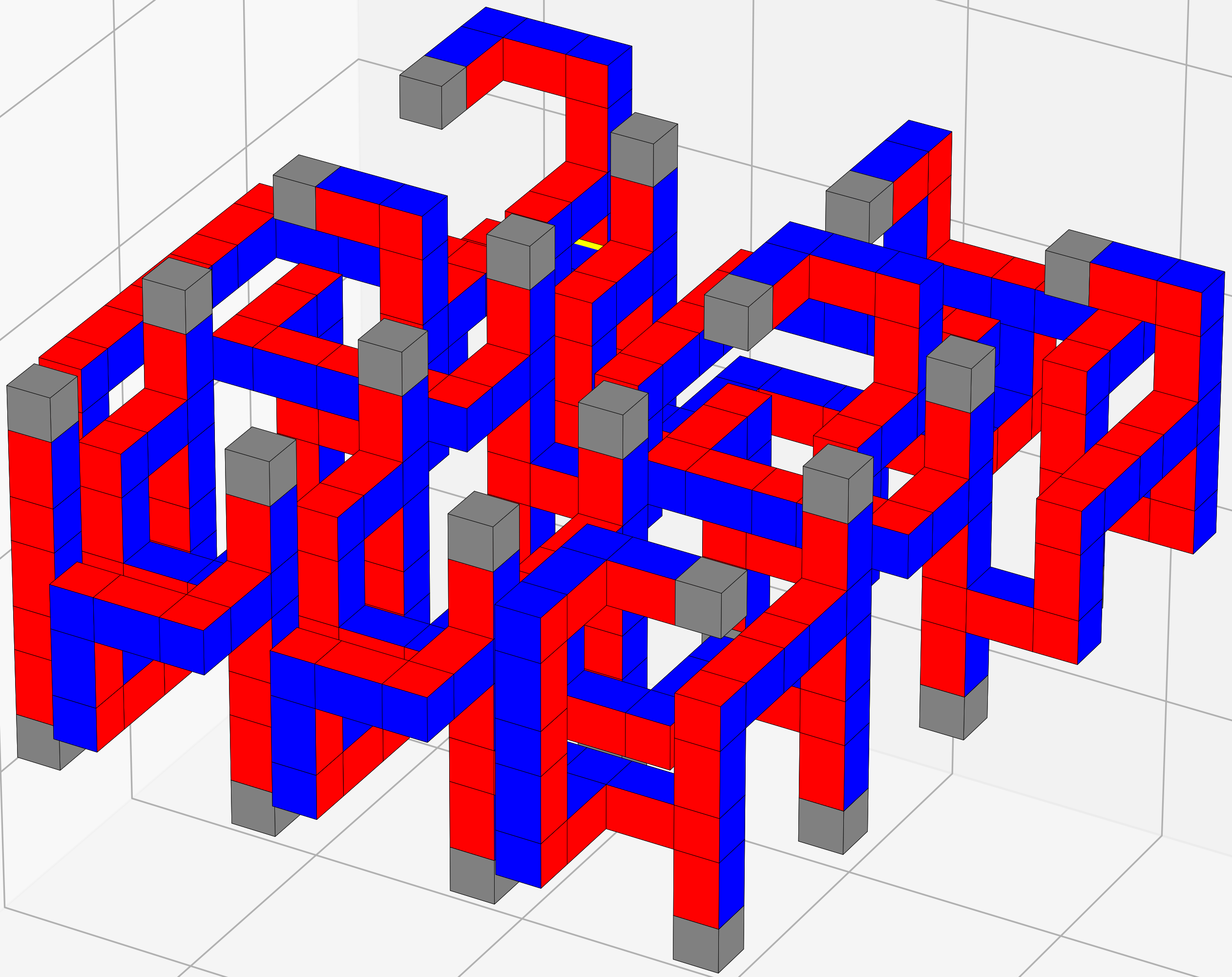}
    % \vspace{-10pt}
    \caption{Compiled pipe diagram for 16-qubit GHZ state}
    \vspace{-10pt}
    \label{fig:ghz_16}
\end{figure}

Compared with the best space--time volume results among the baseline compilers, the best results from \myCompilerNameSpace achieve an average reduction of $46\%$, we also provide the detailed reduction for each benchmark, listed alongside their names in the table. The SAT-solver-based compiler LaSsynth is not included in the table, as many of the benchmarks are non-Clifford circuits and the system sizes are too large for it to handle. Across benchmarks, the reduction ranges from $25\%$ to $90\%$, reflecting their inherently different structures.  
We show the pipe diagram for GHZ state generation in Figure~\ref{fig:ghz_16}, which spans 4 time steps, with 2 of them allocated to the input and output ports. Because GHZ state generation requires only a single H gate and a ladder circuit, this aligns with the 2-layer ladder circuit implementation illustrated in Figure~\ref{fig:Ladder}.

\begin{table*}[t]
  \centering
%   \vspace{-5pt}
  \caption{Overall Comparison. `$^*$' indicates the best result among three configurations of \myCompilerName; `$^{\diamond}$' indicateds the best result among three baseline configurations. Reduction in space--time volume is in the brackets next to the benchmark name.}

\begin{adjustbox}{width=\textwidth}
  {\fontsize{9}{12}\selectfont
    \begin{tabular}{|m{4cm}<{\centering}|c|c|c|c|c|c|}
    \hline
     Benchmark (Ave: $46\%$) & \multicolumn{2}{c|}{BV ($62\%$)} &\multicolumn{2}{c|}{DJ ($59\%$)}& \multicolumn{2}{c|}{Grover Search ($25\%$)}   \\
    \hline
    % Benchmark & Qubit\# & \multicolumn{2}{c|}{Circuit Metrics} \\
    % \hline
    Compiler $/ $ Metrics & Space--time & Comp time (s) & Space--time & Comp time (s) & Space--time & Comp time (s) \\ \hline
    \myCompilerNameSpace(\textsf{Place-Opt}) & 891 & 45.1 & 1215 & 72.6 & 23065$^*$ & 2253.4 \\ \hline
    \myCompilerNameSpace(\textsf{Part-Opt}) & 729 & 9.6 & 1053 & 19.7 & 29960 & 1126.8 \\ \hline
    \myCompilerNameSpace(\textsf{Full-Opt}) & 486$^*$ & 26.3 & 891$^*$ & 55.8 & 23240 & 2412.0 \\ \hline
    DASCOT & 2511 & 1.1 & 3969 & 1.1 & 33425 & 3.2 \\ \hline
    Liblsqecc (tight) & 1290$^{\diamond}$ & 0.05 & 2190$^{\diamond}$ & 0.08 & 30675$^{\diamond}$ & 1.8 \\ 
    \hline
    Liblsqecc (sparse) & 4257 & 0.08 & 7029 & 0.14 & 44415 & 1.4 \\ 
    \hline
     & \multicolumn{2}{c|}{QFT ($35\%$)} &\multicolumn{2}{c|}{QPE ($39\%$)}& \multicolumn{2}{c|}{VQE ($31\%$)}   \\
    \hline
    % Benchmark & Qubit\# & \multicolumn{2}{c|}{Circuit Metrics} \\
    % \hline
    Compiler $/ $ Metrics & Space--time & Comp time (s) & Space--time & Comp time (s) & Space--time & Comp time (s) \\ \hline
    \myCompilerNameSpace(\textsf{Place-Opt}) & 37098 & 2984.6 & 40419 & 3280.5 & 4779 & 378.1 \\ \hline
    \myCompilerNameSpace(\textsf{Part-Opt}) & 39690 & 1464.3 & 45603 & 1573.3 & 4617 & 230.5 \\ \hline
    \myCompilerNameSpace(\textsf{Full-Opt}) & 36531$^*$ & 3097.0 & 39447$^*$ & 3316.6 & 4212$^*$ & 510.8\\ \hline
    DASCOT & 56133$^{\diamond}$ & 675.9 & 64152$^{\diamond}$ & 821.2 & 6075$^{\diamond}$ & 15.9\\ \hline
    Liblsqecc (tight) & 93180 & 3.6 & 96930 & 3.8 & 20580 & 0.77 \\ 
    \hline
    Liblsqecc (sparse) & 175230 & 4.3 & 192357 & 4.5 & 42471 & 0.95\\ 
    \hline
     & \multicolumn{2}{c|}{GHZ State ($90\%$)} &\multicolumn{2}{c|}{W State ($40\%$)}& \multicolumn{2}{c|}{QAOA ($33\%$)}   \\
    \hline
    % Benchmark & Qubit\# & \multicolumn{2}{c|}{Circuit Metrics} \\
    % \hline
    Compiler $/ $ Metrics & Space--time & Comp time (s) & Space--time & Comp time (s) & Space--time & Comp time (s) \\ \hline
    \myCompilerNameSpace(\textsf{Place-Opt}) & 972 & 29.3 & 9639 & 703.5 & 4374 & 449.4 \\ \hline
    \myCompilerNameSpace(\textsf{Part-Opt}) & 486 & 13.2 & 11988 & 285.2 & 4293$^*$ & 208.2 \\ \hline
    \myCompilerNameSpace(\textsf{Full-Opt}) & 243$^*$ & 12.7 & 8505$^*$ & 686.2 & 4374 & 402.6 \\ \hline
    DASCOT & 2511$^{\diamond}$ & 1.2 & 14094$^{\diamond}$ & 39.2 & 6399$^{\diamond}$ & 6.0 \\ \hline
    Liblsqecc (tight) & 3000 & 0.11 & 14850 & 0.59 & 10320 & 0.39 \\ 
    \hline
    Liblsqecc (sparse) & 3267 & 0.07 & 37521 & 0.84 & 17424 & 0.43 \\ 
    \hline
    
    \end{tabular}%
    }
  \label{tab:Overall}%
  \end{adjustbox}
\end{table*}%

In addition, Table~\ref{tab:Overall} presents the breakdown of results for different optimization techniques. On average, enabling placement optimization yields a $35\%$ reduction in space--time volume, while enabling topology-aware circuit partitioning yields a $34\%$ reduction.

\subsection{Hardware Structure Impact} \label{Sec:HardwareStructureImpact}

We studied the impact of different hardware structures in this section, with the results summarized in Table~\ref{tab:Arch}. We select QFT, QPE, VQE as benchmark circuits. For \myCompilerName, full optimization was applied, while for Liblsqecc only the sparse layout was considered, since its compact layout is fixed for a given qubit number and has already been examined in Section~\ref{Sec:OverallResult}. The hardware architectures evaluated include a $6 \times 3$ grid, an $8 \times 2$ grid, and a linear $16 \times 1$ grid. The average reductions achieved are $35\%$, $38\%$, and $38\%$, respectively, which are consistent across different architectures and comparable to the results reported in Section~\ref{Sec:OverallResult} for the $4 \times 4$ grid. The detailed reduction for each benchmark is listed alongside their names in Table~\ref{tab:Arch}.

\begin{table*}[t]
  \centering
%   \vspace{-5pt}
  \caption{Compilation results under different hardware structures. Reduction in space--time volume is in the brackets next to the benchmark name.}

\begin{adjustbox}{width=\textwidth}
  {\fontsize{9}{12}\selectfont
    \begin{tabular}{|m{4cm}<{\centering}|c|c|c|c|c|c|}
    \hline
    $\bm{6 \times 3}$ \textbf{grid} (Ave: $35\%$) & \multicolumn{2}{c|}{QFT ($31\%$)} &\multicolumn{2}{c|}{QPE ($39\%$)}& \multicolumn{2}{c|}{VQE ($35\%$)}   \\
    \hline
    % Benchmark & Qubit\# & \multicolumn{2}{c|}{Circuit Metrics} \\
    % \hline
    Compiler $/ $ Metrics & Space--time & Comp time (s) & Space--time & Comp time (s) & Space--time & Comp time (s) \\ \hline
    \myCompilerNameSpace & 43316 & 3083.4 & 44590 & 3186.2 & 4459 & 375.7 \\ \hline
    DASCOT & 62972 & 716.1 & 72709 & 875.3 & 6825 & 17.7\\ \hline
    Liblsqecc & 186795 & 3.5 & 203700 & 3.7 & 45045 & 0.78 \\ 
    \hline
     $\bm{8 \times 2}$ \textbf{grid} (Ave: $38\%$) & \multicolumn{2}{c|}{QFT ($35\%$)} &\multicolumn{2}{c|}{QPE ($38\%$)}& \multicolumn{2}{c|}{VQE ($40\%$)}   \\
    \hline
    % Benchmark & Qubit\# & \multicolumn{2}{c|}{Circuit Metrics} \\
    % \hline
    Compiler $/ $ Metrics & Space--time & Comp time (s) & Space--time & Comp time (s) & Space--time & Comp time (s) \\ \hline
    \myCompilerNameSpace & 38080 & 3000.5 & 41480 & 3188.5 & 3825 & 328.0 \\ \hline
    DASCOT & 58650 & 771.4 & 67235 & 912.2 & 6375 & 19.1 \\ \hline
    Liblsqecc & 169955 & 3.2 & 185250 & 3.4 & 40755 & 0.70 \\ 
    \hline
     $\bm{16 \times 1}$ \textbf{grid} (Ave: $38\%$) & \multicolumn{2}{c|}{QFT ($34\%$)} &\multicolumn{2}{c|}{QPE ($39\%$)}& \multicolumn{2}{c|}{VQE ($40\%$)}   \\
    \hline
    % Benchmark & Qubit\# & \multicolumn{2}{c|}{Circuit Metrics} \\
    % \hline
    Compiler $/ $ Metrics & Space--time & Comp time (s) & Space--time & Comp time (s) & Space--time & Comp time (s) \\ \hline
    \myCompilerNameSpace & 45540 & 3043.0 & 48213 & 3180.6 & 4455 & 382.5 \\ \hline
    DASCOT & 69102 & 923.8 & 79596 & 921.8 & 7425 & 28.1 \\ \hline
    Liblsqecc & 191940 & 2.7 & 210210 & 2.8 & 45465 & 0.56 \\ 
    \hline
    
    \end{tabular}%
    }
  \label{tab:Arch}%
  \end{adjustbox}
\end{table*}%

\begin{table*}[t]
  \centering
%   \vspace{-5pt}
  \caption{Compilation results for two benchmarks of various sizes}
\begin{adjustbox}{width=\textwidth}
 {\fontsize{9}{12}\selectfont
    \begin{tabular}{|c|c|c|c|c|c|c|c|c|c|c|c|c|}
    \hline
    Compiler & BV (qubit$\#$) & 20  & 40  & 60  & 80  & 100  & DJ (qubit$\#$) & 20  & 40  & 60 & 80 & 100    \\
    \hline
    % Benchmark & Qubit\# & \multicolumn{2}{c|}{Circuit Metrics} \\
    % \hline
    \multirow{2}{*}{\myCompilerName} & Space--time & 594 & 1950 & 3468 & 4693 & 6174 & Space--time & 1089 & 5070 & 11849 & 16606 & 29988 \\ \cline{2-13}
    & Comp time (s) & 27.5 & 46.9 & 51.8 & 57.6 & 59.9 & Comp time (s) & 66.2 & 149.2 & 234.7 & 280.2 & 469.1 \\ \hline
    \multirow{2}{*}{DASCOT} & Space--time & 3465 & 13845 & 29189 & 44042 & 67032 & Space--time & 6039 & 23790 & 52598 & 86279 & 133182 \\ \cline{2-13}
    & Comp time (s) & 1.19 & 1.21 & 1.20 & 1.20 & 1.25 & Comp time (s) & 1.65 & 2.24 & 2.75 & 2.72 & 26.97 \\ \hline
    \multirow{2}{*}{Liblsqecc (tight)} & Space--time & 1692 & 6270 & 12960 & 20538 & 31668 & Space--time & 3420 & 14454 & 28128 & 45612 & 83460 \\ \cline{2-13}
    & Comp time (s) & 0.06 & 0.16 & 0.31 & 0.45 & 0.66 & Comp time (s) & 0.12 & 0.39 & 0.66 & 1.01 & 1.82 \\ \hline
    \multirow{2}{*}{Liblsqecc (sparse)} & Space--time & 5499 & 21375 & 43605 & 65037 & 98049 & Space--time & 10764 & 46125 & 90117 & 140049 & 239568 \\ \cline{2-13}
    & Comp time (s) & 0.09 & 0.31 & 0.58 & 0.82 & 1.22 & Comp time (s) & 0.19 & 0.73 & 1.28 & 1.86 & 3.32 \\ \hline
     Volume Reduction & Ave: $73\%$ & $65\%$  & $69\%$  & $73\%$  & $77\%$  & $81\%$  & Ave: $64\%$ & $68\%$  & $65\%$  & $58\%$ & $64\%$ & $64\%$    \\
    \hline

    \end{tabular}%
}    
  \label{tab:Scalability}%
  \end{adjustbox}
\end{table*}%

\begin{table*}[t]
  \centering
%   \vspace{-5pt}
  \caption{Comparison with LaSsynth. `-' indicates that the SAT solver does not finish within 24 hours.}
\begin{adjustbox}{width=\textwidth}
 {\fontsize{9}{12}\selectfont
    \begin{tabular}{|c|c|c|c|c|c|c|c|c|c|c|c|c|}
    \hline
    Compiler & BV (qubit$\#$) & 20  & 30  & 40  & 50  & 60  & Random Clifford (qubit$\#$) & 4  & 6  & 8 & 10 & 12    \\
    \hline
    % Benchmark & Qubit\# & \multicolumn{2}{c|}{Circuit Metrics} \\
    % \hline
    \multirow{2}{*}{\myCompilerName} & Space--time & 594 & 1001 & 1950 & 2550 & 3468 & Space--time & 200 & 315 & 637 & 756 & 756 \\ \cline{2-13}
    & Comp time (s) & 27.5 & 28.9 & 46.9 & 50.6 & 51.8 & Comp time (s) & 24.1 & 32.9 & 40.7 & 50.6 & 85.6 \\ \hline
    \multirow{2}{*}{LaSsynth} & Space--time & 396 & 572 & - & - & - & Space--time & 100 & 140 & - & -  & -  \\ \cline{2-13}
    & Comp time (s) & 153.0 & 431.8 & - & - & - & Comp time (s) & 4.8 & 14.7 & - & - & - \\ \hline
     Volume Reduction &  & -$33\%$  & -$43\%$  & -  & -  & - &  & -$50\%$  & -$56\%$  & - & - & - \\
    \hline

    \end{tabular}%
}    
  \label{tab:SAT}%
    \end{adjustbox}
    % \vspace{-3pt}
\end{table*}%

\subsection{Scalability Study} \label{Sec:ScalabilityStudy}

We evaluate the scalability of \myCompilerNameSpace on larger-qubit circuits in this section, with the results summarized in Table~\ref{tab:Scalability}. BV and DJ algorithms are selected as benchmarks, with qubit counts ranging from 20 to 100. The average space–time volume reductions achieved are $73\%$ for BV and $64\%$ for DJ. We observe that \myCompilerName's compilation time increases near-linearly with system size. The trend of this growth is illustrated in Figure~\ref{fig:CompTime}.

We compare \myCompilerNameSpace with the SAT-solver-based compiler LaSsynth in Table~\ref{tab:SAT}. While the SAT solver can find the optimal solution and achieve results that are approximately $46\%$ better than ours when it succeeds, its scalability is highly limited. As the system size increases, the problem becomes exponentially harder to solve for LaSsynth. We imposed a one-day timeout for all compilation tasks, after which the LaSsynth failed to produce results beyond 40 qubits for the BV algorithm with simple structures and 8 qubits for the random Clifford circuit.

\begin{figure}[t]
    \centering
    \includegraphics[width=1.0\linewidth]{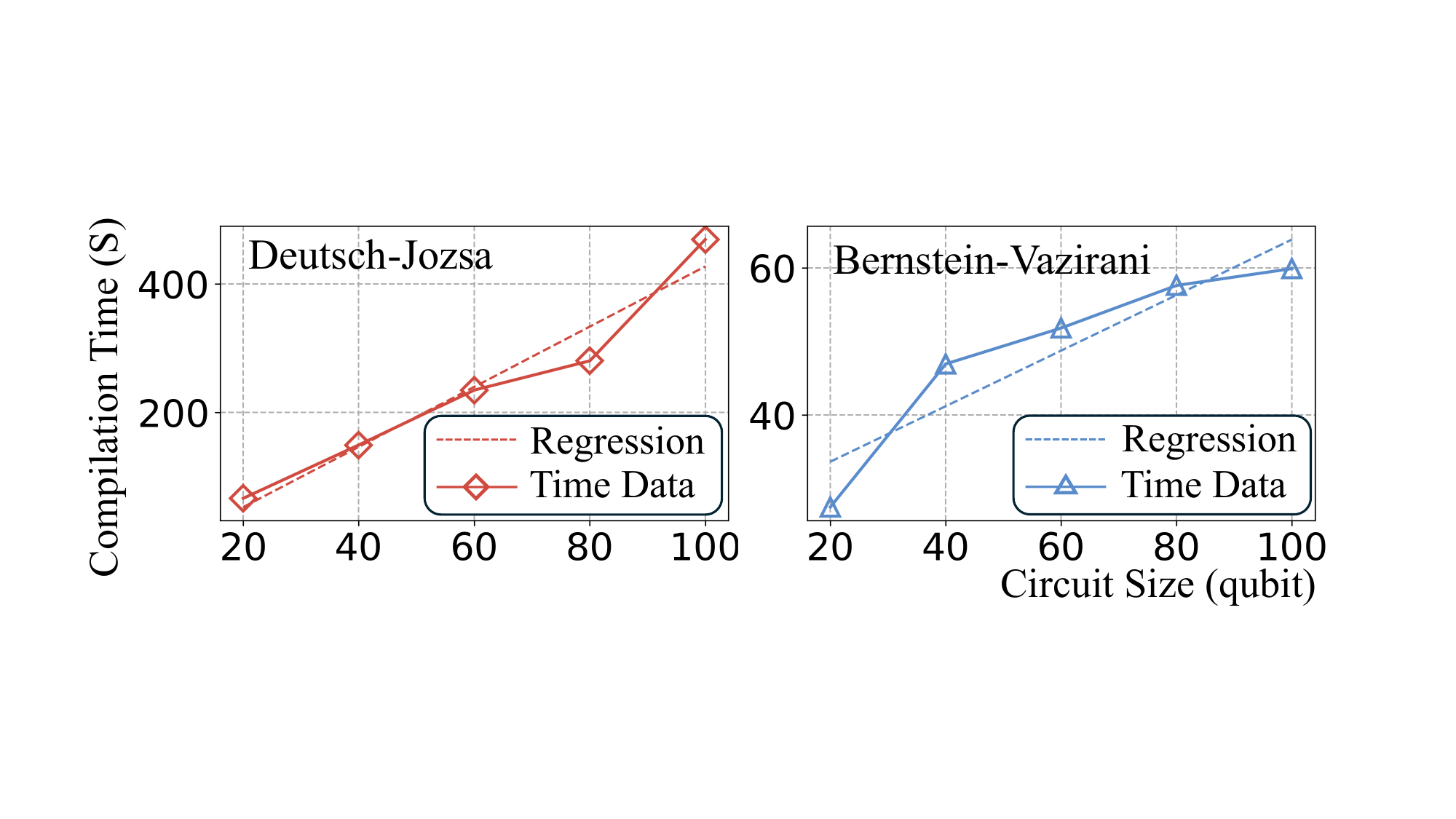}
    % \vspace{-5pt}
    \caption{Scalability study: compilation time of \myCompilerNameSpace as circuit size increases.
}
    \label{fig:CompTime}
\end{figure}
\section{Related Work} \label{sec:Related Work}

\textbf{Gate-Centric Quantum Compilers}: 
Quantum compilation has made significant progress in the last decade, and several industry/academia compilers have been developed~\cite{javadi2024quantum,younis2021berkeley,sivarajah2020t}. Most of them either target the NISQ architectures or compile to the Clifford+T gate set, staying at the purely logical level when targeting fault-tolerant quantum architecture.
Recently, \cite{watkins2024high} and \cite{molavi2025dependency} were proposed to compile and optimize the lattice surgery operations on the surface code. However, they are still designed by the gate-centric philosophy, where the pipe diagrams are constructed by adding new structures for each two-qubit gate individually.
The optimization opportunity from the topological properties of lattice surgery is largely overlooked. 
In contrast, TopoLS is designed to be topology-centric, enabling it to exploit the topological optimization opportunities of lattice surgery when constructing pipe diagrams.

\textbf{SAT Solver for Lattice Surgery}:
In addition to heuristic compilation, Tan et al.~\cite{tan2024sat} developed LaSsynth, an SAT-based approach to synthesize the pipe diagram for a quantum program by encoding the pipe information and constraints into a SAT solver.
Although it can find the pipe diagram with optimal space--time volume for a Clifford circuit block, it suffers from the scalability issue due to the nature of the SAT formulation. 
In contrast, this work can compile and optimize large circuits and support end-to-end compilation for general quantum circuits.

\textbf{ZX-calculus.}
ZX-calculus has been widely used to study quantum computations, including circuit simplification~\cite{duncan2020graph,fischbach2025review,kissinger2019reducing}, circuit reasoning~\cite{van2025optimal,thanos2024automated}, circuit verification~\cite{peham2023equivalence,hurwitz2024simulation}, and the design of measurement-based quantum computation~\cite{backens2021there,kissinger2019universal}. 
Recent work has also explored the connection between ZX diagrams and lattice surgery constructions~\cite{de2020zx,gidney2024inplace,tan2024sat}. However, prior work mainly uses ZX-calculus for reasoning about or verifying lattice surgery protocols, rather than as a compiler intermediate representation. 
In contrast, \myCompilerName{} employs ZX diagrams as the core IR to support program transformations and the construction of optimized lattice surgery layouts.

\section{Conclusion} \label{sec:Conclusion}

We presented \myCompilerName, a topology-centric compiler for lattice surgery programs that leverages ZX diagrams as an intermediate representation to expose the topological structure of lattice surgery operations. By combining ZX-level program transformations with MCTS-based 3D embedding and topology-aware circuit partitioning, \myCompilerNameSpace efficiently constructs pipe diagrams with minimized space–time volume while maintaining scalability. Experimental results show that \myCompilerNameSpace consistently reduces space–time volume across benchmarks and architectures, while showing strong empirical scalability on large circuits, with near-linear growth in compilation time. These results demonstrate the potential of topology-centric compilation for improving the efficiency of fault-tolerant quantum program execution.

\section{Data-Availability}

\myCompilerNameSpace is publicly available as part of the TQEC ecosystem at \url{https://github.com/tqec/TopoLS}.

\section*{Acknowledgements}
We thank Adrien Suau, Austin Fowler, and Jose A Bolanos for their efforts in integrating our tool into the TQEC ecosystem. We also thank Craig Gidney for insightful discussions on magic state injection.
This work was supported in part by the U.S. Department of Energy, Office of Science, Office of Advanced Scientific Computing Research under Contract No. DE-AC05-00OR22725 through the Accelerated Research in Quantum Computing Program MACH-Q project, the U.S. National Science Foundation CAREER Award No. CCF-2338773, and ExpandQISE Award No. OSI 2427020. GL was also supported by the Intel Rising Star Award.

% use the ACM bibliography style
\bibliographystyle{ACM-Reference-Format}
\balance
\bibliography{refs}

\newpage
\appendix

\section{ZX-to-Pipe Correspondence for Common Quantum Gates} \label{Sec:Gate}

We illustrate how commonly used quantum gates are represented in ZX-calculus and realized as pipe structures, as shown in Figure~\ref{fig:Gate}.

The \textbf{H gate} is represented as a yellow box that swaps the red and blue faces in the pipe diagram. In ZX-calculus, the \textbf{S gate} is represented by a normal Z node connected with a Z node with $\pi/2$ phase. In pipe diagram, it corresponds to a three-port blue junction together with a Y-basis measurement or initialization, shown as a green cap in the pipe diagram. The \textbf{T gate} follows a similar interpretation: it involves a three-port blue junction for magic state injection, which produces a $-\pi/4$ phase with $50\%$ probability and necessitates an conditional S correction to obtain the desired $\pi/4$ phase. Recently, the \textbf{$\bm{R_z}$ gate} has gained attention~\cite{sethi2025rescq}, also realized via magic state injection; here, the injection produces $-\theta$ with probability 1/2, necessitating a corrective rotation of $2\theta$ by another injection. This process is probabilistic in depth, continuing until the sum of applied injection angles equals $\theta$, with the failure probability decreasing exponentially with the number of injections.

One limitation of performing magic state injection directly on the logical qubit is that the computation must wait for the injection and the subsequent correction to complete, which can introduce significant latency. 
To mitigate this overhead, we reserve a dedicated magic-state injection port in the pipe diagram (shown as the purple cube in the lower sections of Figure~\ref{fig:Gate}). Instead of performing the injection directly on the logical qubit, this port can be connected to an external logical-qubit patch, where the magic-state injection and the subsequent correction are performed. An explicit realization of such $T$-state injection in Shor Algorithm can be found in~\cite{gidney2025factor}.
This design allows the injection process to be carried out externally while the original logical qubit continues executing subsequent operations. From the program’s perspective, the in-circuit injection therefore appears to take only one time steps, while the actual magic-state preparation and injection are performed asynchronously outside the main execution path. The only requirement is that all $X$ and $Z$ operations tracked in software for the qubit \textbf{must} be clearly known before performing the magic state injection, otherwise, it will lead to an incorrect interpretation of the measurement during the injection process. This is related to the causality.

We give the correctness of our asynchronous $\bm{R_z}$ injection in the following proposition:

\begin{proposition}[Correctness of Efficient $\bm{R_z}$ Injection] 
\end{proposition}

\begin{proof}
 Suppose that all byproduct $X$ and $Z$ operators have been corrected prior to the $\bm{R_z}$ injection. Then the logical qubit can be assumed to be in the state $\ket{\psi} = \alpha \ket{0} + \beta \ket{1}$, while the ancillary logical patch is initialized in $\ket{+}=\frac{1}{\sqrt{2}}\ket0+\frac{1}{\sqrt{2}}\ket1$.
 
    \textbf{Step 1: Joint $ZZ$ Measurement.} There are two possible outcomes, 0 and 1. 
    \textbf{Case 1: $ZZ=0$.} The post-measurement state is $\alpha \ket{00} + \beta \ket{11}$. 
    \textbf{Case 2: $ZZ=1$.} The post-measurement state is $\alpha \ket{01} + \beta \ket{10}$.

    \textbf{Step 2: Magic State Injection.} 
    For \textbf{Case 1}, we inject $\theta$ into the ancillary logical qubit, resulting in the state $\alpha \ket{00} + e^{i \theta}\beta \ket{11}$. 
    For \textbf{Case 2}, we inject $-\theta$ into the ancillary logical qubit, resulting in the state $e^{-i \theta}\alpha \ket{01} + \beta \ket{10}$, which is equivalent to $\alpha \ket{01} + e^{i \theta}\beta \ket{10}$.The injection process is probabilistic, but after several rounds of correction, the desired injection angle is obtained.

    \textbf{Step 3: Measure Ancillary Logical Qubit.} 
    We now measure the ancillary logical qubit in the $X$ basis. 
    For \textbf{Case 1}, if the measurement outcome is $\ket{+}$, the resulting state is $\alpha \ket{0} + e^{i \theta}\beta \ket{1}$ as desired; if the outcome is $\ket{-}$, the resulting state is $\alpha \ket{0} - e^{i \theta}\beta \ket{1}$, which can be viewed as a $Z$ byproduct on the desired state. 
    For \textbf{Case 2}, if the measurement outcome is $\ket{+}$, the resulting state is $\alpha \ket{0} + e^{i \theta}\beta \ket{1}$ as desired; if the outcome is $\ket{-}$, the resulting state is $-\alpha \ket{0} + e^{i \theta}\beta \ket{1}$, which again can be viewed as a $Z$ byproduct on the desired state.

    In fact, the byproduct $X$ and $Z$ operators can be deferred and corrected after the $ZZ$ measurement, but before the magic state injection. For example, consider a state:
    \[
    \ket{\psi} = X(\alpha \ket{0} + \beta \ket{1}).
    \]
    If the presence of the $X$ byproduct is known, we can conditionally inject an angle $-\theta$, resulting in the state:
    \[
    \ket{\psi} = e^{-i\theta}\alpha \ket{1} + \beta \ket{0}.
    \]
    This differs from the desired state:
    \[
    \ket{\psi} = \alpha \ket{0} + e^{i\theta}\beta \ket{1},
    \]
    by an $X$ byproduct operator. The remaining cases can be verified similarly.
    
    This approach allows us to avoid waiting before the $ZZ$ measurement, shifting the required correction step to just before the magic state injection instead.
\end{proof}

\end{document}